\newcommand{\oo}{\"o}
\newcommand{\intt}{\textnormal{Int}}
\newcommand{\g}{\gamma}
\newcommand{\G}{\Gamma}
\newcommand{\s}{\sigma}
\newcommand{\om}{\omega}
\newcommand{\Om}{\Omega}
\newcommand{\real}{\mathbb{R}}
\newcommand{\naturals}{\mathbb{N}}
\newcommand{\integers}{\mathbb{Z}}
\newcommand{\complex}{\mathbb{C}}
\newcommand{\oq}{\ {\raise 7pt\hbox{${\scriptstyle\circ}$}}
	\kern -7pt{
		\hbox{$Q$}}}
\newcommand {\ba}{\mathbf a}
\newcommand {\bx}{\mathbf x}
\newcommand {\hbx}{\hat{\bx}}
\newcommand {\bh}{\mathbf h}
\newcommand {\bm}{\mathbf m}
\newcommand {\by}{\mathbf y}
\newcommand {\bn}{\mathbf n}
\newcommand {\bnu}{\boldsymbol\nu}
\newcommand\norm[1]{\left\lVert#1\right\rVert}
\DeclareMathOperator*{\esssup}{ess-sup}
\DeclareMathOperator {\diam} {{diam}}
\newtheorem{thm}{Theorem}[section]
\newtheorem{cor}[thm]{Corollary}
\newtheorem{lem}[thm]{Lemma}
\newtheorem{prop}[thm]{Proposition}
\theoremstyle{definition}
\newtheorem{rem}[thm]{Remark}
\numberwithin{equation}{section}
\newcommand{\bee}{\begin{equation}}
	\newcommand{\ene}{\end{equation}}
\newcommand{\bees}{\begin{equation*}}
	\newcommand{\enes}{\end{equation*}}
\newcommand{\bes}{\begin{split}}
	\newcommand{\ens}{\end{split}}
\newcommand{\bet}{\begin{thm}}
	\newcommand{\ent}{\end{thm}}
\newcommand{\bel}{\begin{lem}}
	\newcommand{\enl}{\end{lem}}
\newcommand{\bec}{\begin{cor}}
	\newcommand{\enc}{\end{cor}}
\newcommand{\bep}{\begin{proof}}
	\newcommand{\enp}{\end{proof}}
\newcommand{\ber}{\begin{rem}}
	\newcommand{\enr}{\end{rem}}
\newcommand{\La}{\Lambda}
\newcommand{\al}{\alpha}
\newcommand{\Ga}{\Gamma}
\begin{document}
	\hoffset -4pc

\title
[Eigenvalue Bounds for Reduced Density Matrices]
{Eigenvalue Bounds for Multi-Particle Reduced Density Matrices of Coulombic Wavefunctions}
\author{Peter Hearnshaw}
\address{Centre for the Mathematics of Quantum Theory\\ University of Copenhagen\\
	Universitetsparken 5\\ DK-2100 Copenhagen \O\\ Denmark}
\email{ph@math.ku.dk}

\begin{abstract}	 
For bound states of atoms and molecules we consider the corresponding $K$-particle reduced density matrices, $\Ga^{(K)}$, for $1 \le K \le N-1$. Previously, eigenvalue bounds were obtained in the case of $K=1$ and $K=N-1$ by A.V. Sobolev. The purpose of the current work is to obtain bounds in the case of $2 \le K \le N-2$. For such $K$ we label the eigenvalues of the positive, trace class operators $\Ga^{(K)}$ by $\lambda_n(\Ga^{(K)})$ for $n=1,2,\dots$, and obtain the bounds $\lambda_n(\Ga^{(K)}) \le Cn^{-\al_K}$ for all $n$, where $\al_K = 1 + 7/(3L)$ and $L = \min\{K,N-K\}$.
\end{abstract}

\keywords{Multi-particle quantum system, Coulombic wavefunction, Schr\"odinger equation, Electronic structure, Two-particle reduced density matrix}
\subjclass{35J10, 46E35, 81V55, 81V70}

\maketitle

\section{Introduction}

We consider an atom with $N$ electrons and a nucleus of charge $Z >0$ fixed at the origin. The non-relativistic Schr\oo dinger equation for this system is
\begin{align}
\label{eq:se}
(-\Delta + V)\psi = E\psi
\end{align}
for $\Delta$ the Laplacian on $\real^{3N}$ and $V$ the Coulomb potential given by
\begin{align}
\label{eq:v}
V(\bx) = -\sum_{j=1}^N \frac{Z}{|x_j|} + \sum_{1 \le j<k \le N}\frac{1}{|x_j-x_k|}
\end{align}
where $\bx = (x_1, \dots, x_N) \in \real^{3N}$ is the collection of $\real^3$ positions of the $N$ particles. We consider $\psi$ to be a normalized eigenfunction of this equation in $L^2(\real^{3N})$, and we impose no symmetry assumptions. All results can be extended to the case of molecules.

We now define the \textit{reduced density matrices} for such $\psi$. Firstly, for each permutation $\s$ in $S_N$, the symmetric group of degree $N$, we set $P_{\s} : \real^{3N} \to \real^{3N}$ given by
\begin{align*}
P_{\s}(\bx) = (x_{\s(1)}, \dots, x_{\s(N)}), \quad \bx \in \real^{3N}.
\end{align*}
Let $K=1,\dots, N-1$. Then, for a given $\s \in S_N$, we can define the function
\begin{align*}
\g_{\s}^{(K)}(\check\bx, \check\by) = \int_{\real^{3(N-K)}}\overline{\psi(P_{\s}(\check\bx,\hbx))} \psi(P_{\s}(\check\by, \hbx))\,d\hbx
\end{align*}
for $\check\bx,\, \check\by \in \real^{3K}$. We set $\g^{(K)} = \g_I^{(K)}$ where $I$ is the identity in $S_N$. Indeed, for regularity concerns the choice of $\s$ is unimportant so often $\s = I$ is taken. In fact, for $\psi$ totally symmetric or antisymmetric we get $\g_{\s}^{(K)} = \g^{(K)}$ for any $\s$.

This functions $\g_{\s}^{(K)}$ are well-defined and are in fact Lipschitz continuous due to $\psi$ being Lipschitz continuous as shown by Kato, \cite{kato57}. These functions are analytic outside of a singular set, \cite{hs22}, \cite{jec22}. There have also been many recent results concerning the regularity at the singular sets, in particular at the ``diagonal'', namely \cite{cio20}, \cite{cio22}, \cite{hs23}, \cite{he}, \cite{jec23}, \cite{jec24}.
  
The $K$-\textit{particle reduced density matrix} is then defined as the integral operator $\bold{\Ga}^{(K)} : L^2(\real^{3K}) \to L^2(\real^{3K})$ given by
\begin{align}
\label{eq:dm1}
\big(\bold{\Ga}^{(K)}u\big)(\check\bx) = \frac{1}{(N-K)!}\sum_{\s \in S_N}\big(\G^{(K)}_{\s}u\big)(\check\bx), \quad u \in L^2(\real^{3K}),
\end{align}
where
\begin{align}
\label{eq:dm2}
\big(\G^{(K)}_{\s}u\big)(\check\bx) = \int_{\real^{3K}}\g_{\s}^{(K)}(\check\bx, \check\by)u(\check\by)\,d\check\by.
\end{align}
for $\check\bx \in \real^{3K}$. Furthermore, we define $\G^{(K)} = \G^{(K)}_I$ for $I$ the identity in $S_N$. We note that the factor $1/(N-K)!$ appears due to overcounting, since for many $\s$ the integrals are identical by Fubini's theorem. If $\psi$ is totally symmetric or antisymmetric then formula (\ref{eq:dm1}) may be simplified by replacing the summation by the factor $N!$. The operators $T = \bold{\Ga}^{(K)}$ or $\G^{(K)}_{\s}$ are positive and trace class, and we label their eigenvalues as $\lambda_n(T)$, non-increasing counting multiplicity.

In our results, we will require that $\psi$ decays exponentially. This is really no restriction at all since it holds true whenever $E$ is a discrete eigenvalue, for a detailed discussion see \cite{simon}. More precisely, we consider $\psi$ such that there exists $C, \kappa>0$ such that
\begin{align}
\label{eq:exp_decay}
|\psi(\bx)| \le Ce^{-\kappa|\bx|'},\quad \text{where}\quad |\bx|' = |x_1| + \dots + |x_N|
\end{align}
for all $\bx \in \real^{3N}$. The use of $|\,\cdot\,|'$ rather than the standard Euclidean norm $|\,\cdot\,|$ in $\real^{3N}$ is simply for later convenience.
	
We are interested in the eigenvalue decay of the operators $\Ga^{(K)}$ for $1 \le K \le N-1$. The first significant result in this direction seems to be \cite{frie03}, where it was shown that $\Ga^{(1)}$ has infinite rank for $\psi$ a fermionic ground state. Using regularity estimates for $\psi$ given by S. Fournais and T.\O. S\o rensen in \cite{fs21}, bounds on the eigenvalue decay were then obtained by A.V. Sobolev \cite{sob22_1} to give the following theorem.
\begin{thm}\cite[Theorem 1.1]{sob22_1}
\label{thm:sobolev}
Let $\psi$ be any eigenfunction of (\ref{eq:se}) obeying the exponential decay condition (\ref{eq:exp_decay}). Then there exists $C$ such that, for any $n \ge 1$,
\begin{align}
\lambda_n(\Ga^{(1)}),\, \lambda_n(\Ga^{(N-1)}) \le Cn^{-8/3}.
\end{align}
\end{thm}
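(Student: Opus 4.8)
The plan is to reduce the statement to an eigenvalue estimate for a single unpermuted one-particle block, to localise its kernel $\g^{(1)}$ around the nuclear coincidence, and then to read off the decay rate $n^{-8/3}$ from the regularity theory of Coulombic eigenfunctions together with a Schatten-class bound for integral operators with power-type singularities.

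First I would reduce to one block. Since $\bold{\Ga}^{(1)}$ and $\bold{\Ga}^{(N-1)}$ are finite sums of the positive operators $\G^{(1)}_\s$ and $\G^{(N-1)}_\s$ respectively, Weyl's inequality $\lambda_{j+k-1}(A+B)\le\lambda_j(A)+\lambda_k(B)$ for positive compact operators reduces the theorem to bounds $\lambda_n(\G^{(1)}_\s)\le Cn^{-8/3}$ and $\lambda_n(\G^{(N-1)}_\s)\le Cn^{-8/3}$ uniform in $\s$. By the Schmidt decomposition of $\psi$ across the splitting of the $N$ particles into one particle and the remaining $N-1$ (equivalently, $\psi$ regarded as a Hilbert--Schmidt operator between $L^2(\real^3)$ and $L^2(\real^{3(N-1)})$), every $(N-1)$-particle block is cospectral, on its nonzero eigenvalues, with a one-particle block $\G^{(1)}_{\s'}$. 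So it is enough to prove $\lambda_n(\G^{(1)}_\s)\le Cn^{-8/3}$ with $C$ independent of $\s$, and by relabelling the particles we may take $\s=I$; the kernel in question is then $\g^{(1)}(x,y)$ with $x,y\in\real^3$.

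Next I would use the regularity of $\g^{(1)}$. For $K=1$ the function $\g^{(1)}(x,y)$ is real-analytic on $(\real^3\setminus\{0\})\times(\real^3\setminus\{0\})$ --- in particular $\g^{(1)}$ has no diagonal singularity --- by \cite{hs22}, \cite{jec22}, and by the exponential decay \eqref{eq:exp_decay} it and all its derivatives decay exponentially there. Fix a cutoff $\chi$ equal to $1$ near $0$ and write $\g^{(1)}=(1-\chi(x))(1-\chi(y))\,\g^{(1)}+(\text{terms carrying a factor }\chi(x)\text{ or }\chi(y))$. The operator with kernel $(1-\chi(x))(1-\chi(y))\,\g^{(1)}$ has a smooth, exponentially decaying kernel on $\real^6$, so its eigenvalues decay faster than any power of $n$ and it is negligible. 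Near $x=0$, Kato's cusp condition and the refined estimates of Fournais and S\o rensen \cite{fs21} give, away from the remaining coincidences of the configuration, an expansion $\psi(x,\hbx)=c_0(\hbx)+|x|\,c_1(\hbx)+\dots$ whose worst singular profile in $x$ is $|x|$; each profile $|x|^a(\log|x|)^b$ contributes to $\g^{(1)}$, for $x$ near $0$, a kernel of the separated form $|x|^a(\log|x|)^b\,\chi(x)$ times a function of $y$ alone, hence a rank-one operator, and the rapid decrease of the higher profiles makes their sum negligible too; the same holds near $y=0$.

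The only contribution that limits the rate comes from configurations in which a \emph{traced-out} electron lies near the nucleus --- the triple coincidence $x_1=x_2=0$ --- where the nuclear and electron--electron cusps of $\psi$ interact and the expansion above ceases to be separable. Using the structure of $\psi$ near such coincidences, again from \cite{fs21}, I would write the corresponding part of $\g^{(1)}$ as a model operator with an explicit power-type kernel singularity and apply a Birman--Solomyak-type Schatten-class estimate; balancing the Coulomb regularity order (the profile $|x|$ lies in $H^s(\real^3)$ for every $s<5/2$) against the dimension $d=3$, and including the extra power of $n$ carried by the trace class, produces $\lambda_n\le Cn^{-8/3}$ for this piece, and summing the three contributions gives the theorem. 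The hard part will be precisely this last step: identifying which singular profiles the triple coincidence actually produces, and with which coefficients, and having a spectral estimate sharp enough to turn the Coulomb regularity into the rate $n^{-8/3}$ rather than the softer rate a crude Sobolev-embedding argument would give. Checking that the remaining error terms --- the higher cusp profiles, the mollified electron--electron cusps, and the cross terms near $x=0$ or $y=0$ --- are genuinely of lower order is where most of the technical work sits.
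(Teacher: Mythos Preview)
This theorem is cited from \cite{sob22_1} and is not proved in the present paper; the paper's own contribution is the extension to $2\le K\le N-2$. The method Sobolev uses, and which the present paper generalises, is quite different from what you outline: one does \emph{not} analyse the kernel $\gamma^{(1)}$ directly. Instead one writes $\Gamma^{(1)}=(\Psi^{(1)})^*\Psi^{(1)}$ with $\Psi^{(1)}$ the Hilbert--Schmidt operator whose kernel is $\psi(x,\hbx)$ itself, factors $\psi=e^{F}\phi$ via the Jastrow factor \eqref{eq:f_def}, and feeds the pointwise derivative bounds on $\phi$ from \cite{fs21} (cf.\ Proposition~\ref{prop:phi_bound}) into the Birman--Solomyak singular-value estimates of Proposition~\ref{prop:bs}. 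The electron--electron cusp gives the effective regularity $\alpha=1$ in the $x$-variable, hence $s=\alpha+3/2=5/2$, $d=3$, and $1/q=1/2+5/6=4/3$, so $s_n(\Psi^{(1)})\le Cn^{-4/3}$ and $\lambda_n(\Gamma^{(1)})\le Cn^{-8/3}$. The $\Gamma^{(N-1)}$ statement then follows from \eqref{eq:psi_lam}.

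Your proposal contains a genuine error that undermines the whole scheme. You assert that $\gamma^{(1)}(x,y)$ is real-analytic on $(\real^3\setminus\{0\})\times(\real^3\setminus\{0\})$ and in particular has no diagonal singularity. This is false: the analyticity results of \cite{hs22}, \cite{jec22} hold only away from the full singular set, which for $\gamma^{(1)}$ \emph{includes the diagonal} $\{x=y\}$. The diagonal cusp is produced by the integration: in $\int\overline{\psi(x,\hbx)}\psi(y,\hbx)\,d\hbx$ the electron--electron singularities $|x-x_k|$ and $|y-x_k|$ of the two factors coalesce as $x\to y$, and this is precisely the mechanism that fixes the rate $8/3$; see Remark~\ref{rem:1}(3), which explicitly attributes $\alpha_1=8/3$ to a fifth-order cusp of $\gamma^{(1)}$ at $x\approx y$. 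Consequently your localisation into ``near $x=0$'', ``near $y=0$'' and ``smooth remainder'' misses the dominant contribution entirely, and the mechanism you propose instead --- a triple coincidence $x_1=x_2=0$ with a traced-out electron at the nucleus --- is a measure-zero event in the $\hbx$-integration and does not generate a singularity of $\gamma^{(1)}$ at generic $(x,y)$. The rank-one extraction argument therefore cannot close, and the ``hard part'' you flag is not merely technical but rests on a misidentification of the singular set.
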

In the same manner as described in Remark \ref{rem:1}(1) below, the above bounds also immediately hold for $\Ga_{\s}^{(1)}, \Ga_{\s}^{(N-1)}$ for any $\s \in S_N$, and also for $\bold{\Ga}^{(1)}$ and $\bold{\Ga}^{(N-1)}$. The result for $\Ga^{(N-1)}$ was not specifically mentioned in \cite{sob22_1}, but is an immediate consequence of (\ref{eq:psi_lam}) below. The results of Theorem \ref{thm:sobolev} were extended in \cite{sob22_2} to give asymptotics for $\bold{\Ga}^{(1)}$ and $\bold{\Ga}^{(N-1)}$ with the same eigenvalue decay rate. See also \cite{sob22_3} for corresponding result for the kinetic energy density matrix.

The following is our main theorem.
\begin{thm}
\label{thm:main}
Let $\psi$ be any eigenfunction of (\ref{eq:se}) obeying the exponential decay condition (\ref{eq:exp_decay}). For each $2 \le K \le N-2$ there exists $C$ such that, for all $n \ge 1$,
\begin{align}
\label{eq:main}
\lambda_n(\Ga^{(K)}) \le Cn^{-\al_K},\quad  \al_K = 1 + \frac{7}{3L}
\end{align}
where $L = \min\{K,N-K \}$.
\end{thm}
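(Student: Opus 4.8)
The plan is to convert the problem, by an operator-theoretic reduction, into one about the regularity of the kernel $\g^{(L)}$ with $L=\min\{K,N-K\}$, and then to harvest that regularity from the Fournais--S\o rensen structure of $\psi$ together with the smoothing produced by the partial integration. First I would dispose of the case $K>N/2$. Viewing $\overline\psi$ as the kernel of a Hilbert--Schmidt operator $W\colon L^2(\R^{3(N-K)})\to L^2(\R^{3K})$ by $(Wg)(\check\bx)=\int\overline{\psi(\check\bx,\hbx)}\,g(\hbx)\,d\hbx$, a routine computation identifies $\Ga^{(K)}$ with $WW^{*}$ and $\Ga^{(N-K)}_{\s}$ (for the permutation $\s\in S_N$ interchanging the two coordinate blocks) with $\overline{W^{*}W}$, so that $\lambda_n(\Ga^{(K)})=\lambda_n(\Ga^{(N-K)}_{\s})$ for every $n$. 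By the transfer principle recorded in Remark~\ref{rem:1}(1) the bound for $\Ga^{(N-K)}_\s$ follows from that for $\Ga^{(N-K)}$, so it suffices to prove \eqref{eq:main} for $\Ga^{(L)}$ with $L\le N/2$.

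For such $\Ga^{(L)}$ I would, following the scheme of \cite{sob22_1}, use a Birman--Solomyak--type estimate: if, after multiplication by the weight $e^{\kappa(|\check\bx|'+|\check\by|')}$, the kernel $\g^{(L)}$ lies in an $L^2$-Sobolev space of order $m$ on $\R^{3L}\times\R^{3L}$ (in the anisotropic sense dictated by the singular set — what matters is the smoothness transverse to it), then $\lambda_n(\Ga^{(L)})\le Cn^{-1-2m/(3L)}$. Thus the target exponent $1+7/(3L)$ corresponds to establishing \emph{effective smoothness of order $7/2$} for $\g^{(L)}$ in the retained variables, uniformly up to exponential weights. By \eqref{eq:exp_decay} one may localize to $\{|\bx|'\lesssim1\}$ up to exponentially small errors, and then take a partition of unity on $\R^{3L}$ subordinate to the singular set $\CS=\{\check\bx:\ x_i=0\text{ or }x_i=x_j,\ 1\le i<j\le L\}$. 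Away from $\CS$ the function $\psi(\,\cdot\,,\hbx)$ is real-analytic in $\check\bx$, so those pieces of $\g^{(L)}$ are smooth with exponential decay and contribute eigenvalues decaying faster than any power of $n$.

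On the pieces meeting $\CS$ I would invoke the regularity estimates of \cite{fs21}. Write $\psi=e^{F}\phi$ with $F$ the explicit cusp function, split $F=F^{(L)}+F'$ where $F^{(L)}=-Z\sum_{i\le L}|x_i|+\tfrac12\sum_{i<j\le L}|x_i-x_j|$ collects the terms depending only on the retained variables, and peel $e^{F^{(L)}}$ off as a multiplicative factor. Since $e^{F^{(L)}}$ is bounded on $\{|\bx|'\lesssim1\}$ and $\lambda_n(ABA^{*})\le\|A\|^{2}\lambda_n(B)$ for $B\ge0$, this reduces matters to the operator with kernel $\int\overline{\psi^{\sharp}(\check\bx,\hbx)}\,\psi^{\sharp}(\check\by,\hbx)\,d\hbx$, where $\psi^{\sharp}=e^{-F^{(L)}}\psi$. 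The remaining $\check\bx$-dependent non-smoothness of $\psi^{\sharp}$ sits on the mixed loci $x_i=x_l$ with $i\le L<l$, which lie on the integration variables; the integral over $\hbx$ convolves these against $|\,\cdot\,|$ and hence gains derivatives, so they do not degrade the regularity in $\check\bx$. What survives is governed by the genuine smoothness of the regular factor from \cite{fs21} and its higher-order refinements (which strip off the $|x_i|^2\log|x_i|$ and mixed second-order terms), and this is the source of the order $7/2$; the worst point is the total coalescence $\check\bx=\check\by=0$ of all retained electrons at the nucleus, of codimension $3L$ in $\R^{3L}$. Summing over the finite partition, the non-singular pieces give super-polynomial decay and the singular pieces give $n^{-(1+7/(3L))}$, which dominates; combined with the exponentially small tail this yields \eqref{eq:main}.

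I expect the crux to be the regularity step: pinning down precisely which explicit terms must be removed before the Fournais--S\o rensen estimates can be applied to the remainder, controlling the several simultaneous retained cusps at $\check\bx=0$, and tracking weights and $\hbx$-uniformity carefully enough that exactly the smoothness order $7/2$ — and not less — is delivered to the Birman--Solomyak bound. This is bookkeeping rather than a single hard inequality, but it is where the gain $7/(3L)$ over the trivial exponent $1$ is won or lost, and it is also what separates the present range $2\le K\le N-2$ (where the retained variables carry genuine coalescence loci) from the boundary cases treated in \cite{sob22_1}.
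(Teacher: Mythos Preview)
Your reduction to $L=\min\{K,N-K\}$ via $\Ga^{(K)}=WW^{*}$, $\overline{\Ga^{(N-K)}_{\s}}=W^{*}W$ is exactly what the paper does (its $W$ is called $\Psi^{(K)}$). The real divergence is in the next step, and there is a genuine gap.

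You claim a Birman--Solomyak principle of the form ``$\g^{(L)}\in H^{m}$ (anisotropically) $\Rightarrow \lambda_n(\Ga^{(L)})\le Cn^{-1-2m/(3L)}$'' and then set out to prove $m=7/2$ for $\g^{(L)}$. But Birman--Solomyak applied to the positive operator $\Ga^{(L)}$ using regularity of its kernel $\g^{(L)}$ in the $\check\by$-variable gives only $\lambda_n=s_n(\Ga^{(L)})\le Cn^{-1/2-m/(3L)}$, which is too weak by a factor of two in the exponent. The factor of two you have written into your formula is only available through the factorisation $\Ga^{(L)}=(\Psi^{(L)})^{*}\Psi^{(L)}$, and then the relevant smoothness is that of $\psi(\,\cdot\,,\hbx)$ for \emph{fixed} $\hbx$, not that of $\g^{(L)}$. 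This is precisely where your scheme breaks: you invoke the $\hbx$-integration to smooth the mixed cusps $e^{\frac14\tau(x_i-x_l)}$, $i\le L<l$, but that integration is only present in $\g^{(L)}$, not in the kernel of $\Psi^{(L)}$. For fixed $\hbx$ the function $\psi^{\sharp}(\,\cdot\,,\hbx)$ still carries those cusps and is only $N^{5/2}_2$ in $\check\bx$, which would yield the weaker exponent $1+5/(3L)$. You cannot have both the squaring from $\Psi^{*}\Psi$ and the smoothing from the $\hbx$-integral.

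The paper resolves this by a device you do not mention: it writes $\psi=\mathcal A(\check\bx)\,\mathcal B(\hbx)\,\mathcal C(\check\bx,\hbx)\,\mu(\check\bx,\hbx)$, applies Birman--Solomyak to $\intt(\mathcal B\,\mu\,\mathcal A)$ on unit cubes (where $\mu$ has the full $N^{7/2}_2$-regularity in $\check\bx$), and then handles the mixed-cusp factor $\mathcal C$ not by integration but as a \emph{multiplier} on $S_{q,\infty}$ (Lemma~\ref{lem:multiplier} and Lemma~\ref{lem:c_multiplier}); since each factor $e^{\tau(x_i-x_l)/4}$ depends on a single $\R^3$-component of $\check\bx$, the multiplier bound costs nothing in the range $q>3/4$. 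A second ingredient you also skip is Lemma~\ref{lem:mb_besov}: the pointwise estimates of Fournais--S\o rensen type control only $\partial_{x_j}^{m}\phi$ for a single $j$, and one needs a separate argument to pass from $N^{7/2}_2(\R^3)$-regularity in each coordinate to $N^{7/2}_2(\R^{3L})$-regularity of $\mu(\,\cdot\,,\hbx)$.
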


\begin{rem}
\begin{enumerate}
\label{rem:1}
\item The same eigenvalue bounds hold for $\Ga_{\s}^{(K)}$ for any $\s \in S_N$, since the proof is not sensitive to such a modification. Furthermore, the same bounds hold for $\bold{\Ga}^{(K)}$. Indeed, the condition (\ref{eq:main}) is precisely that $\Ga^{(K)} \in S_{1/\al_K,\,\infty}$ and these are vector spaces, see Section \ref{chpt:sp}.
\item $\Ga^{(K)}$ is trace class for each $K$. Hence, by (\ref{eq:sp_incl0}) we immediately get the bound (\ref{eq:main}) with $\al_K = 1$. We therefore obtain an improvement to this for every $K$.
\item We do not believe the above values of $\al_K$ for $2 \le K \le N-2$ are optimal. However, we do not expect these can chosen smaller than the optimal value for $K=1$, namely $\al_1=8/3$. We provide an argument based on regularity of the kernels $\g^{(K)}$, see the introduction for relevant references. As pointed out in \cite[Remark 1.2(7)]{hs23} the value $\al_1=8/3$ is seen to arise from a fifth order cusp for $\g^{(1)}(x_1, y_1)$ at the diagonal $x_1 \approx y_1$. Consider, for example, $K=2$ and take some non-zero $z,z'$ with $z \ne z'$. Then, using the same arguments which demonstrated a fifth order cusp for $\g^{(1)}$, we would get that $\g^{(2)}(x_1, z, y_1, z')$ has a fifth order cusp around $x_1 \approx y_1$ for $x_1, y_1$ separated from $0, z$ and $z'$. Due to the presence of the same type of singularity, we therefore do not expect a stronger decay rate of eigenvalues.
\item Eigenvalue bounds (\ref{eq:main}) can tell you how well $\Ga^{(K)}$ can be approximated by projections of finite rank. This is of particular interest in the $K=2$ case, where one may obtain the eigenvalue of $\psi$ from knowledge of $\bold{\Ga}^{(2)}$ alone via an explicit formula, \cite[(3.1.34)]{ls10}.
\end{enumerate}
\end{rem}
The proof proceeds by factorizing $\Ga^{(K)}$ as the product $(\Psi^{(K)})^*\Psi^{(K)}$ for a Hilbert-Schmidt operator $\Psi^{(K)}$, defined later in (\ref{eq:psi_k}). It will then suffice to consider bounds to the singular values of $\Psi^{(K)}$. In particular we need inclusion into certain Schatten classes, $S_p$ or $S_{p,\infty}$, defined in Section \ref{chpt:sp}.

The kernel of $\Psi^{(K)}$ can be factorized appropriately using Jastrow factors, a common technique used in the study of the regularity of wavefunctions. One factor has improved smoothness compared to the original kernel as shown by the enhanced pointwise bounds given in Proposition \ref{prop:phi_bound}. This can be shown to locally lie in a Besov-Nikol'skii space of good smoothness. Such spaces are introduced in Section \ref{chpt:besov}. The singular values of the integral operator with this smoothened factor as its kernel can be estimated using the Birman-Solomyak bounds given in \cite{bs77}. Finally, to get the kernel of $\Psi^{(K)}$ we need to multiply with the remaining factors. Some can be considered ``weights'' to the Birman-Solomyak bounds, and these do not affect the decay of the singular values. Others will be considered ``multipliers'' on $S_p$ or $S_{p,\infty}$. These are functions with the property that they can multiply a kernel of an operator in $S_p$ or $S_{p,\infty}$, and the resulting operator will lie in the same space.
%
\\\\
\textbf{Notation}. For each $1 \le K \le N$ we can define
\begin{align}
\label{eq:sig_k}
\Sigma^{(K)} = \Big\{ \bx \in \real^{3K} : \prod_{1 \le j \le K}|x_j| \prod_{1 \le j<k\le K}|x_j-x_k| = 0 \Big\}.
\end{align}
For simplicity, we will denote $\Sigma = \Sigma^{(N)}$ which is the singular set of the Coulomb potential, $V$, given in (\ref{eq:v}). By elliptic regularity, it's straightforward to show that if $\psi$ is a weak solution to (\ref{eq:se}) then $\psi \in C^{\infty}(\real^{3N}\backslash\Sigma)$.

Given some $1 \le K \le N-1$, we will consider $\check\bx \in \real^{3K}$ and $\hbx \in \real^{3(N-K)}$ whose $\real^3$ components are labelled as follows,
\begin{align}
\label{not:x1}
\check\bx = (x_1, \dots, x_K), \quad \hbx = (x_{K+1}, \dots, x_N).
\end{align}
Therefore, $(\check\bx, \hbx) = \bx$ for $\bx = (x_1, \dots, x_N)$ an element of $\real^{3N}$.

Now, for any $j \in \{1, \dots, K\}$ and $l \in \{K+1, \dots, N\}$ we remove a variable from either $\check\bx$ or $\hbx$ using the following subscript notation,
\begin{align}
\label{not:x2}
\check\bx_j = (x_1, \dots, x_{j-1}, x_{j+1}, \dots, x_K), \quad \hbx_l = (x_{K+1}, \dots, x_{l-1}, x_{l+1}, \dots, x_N)
\end{align}
with obvious modifications for $j=1$ or $K$, and for $l=K+1$ or $N$.
We also write
\begin{align}
\label{not:x3}
(x_j, \check\bx_j) = \check\bx, \quad (x_l, \hbx_l) = \hbx.
\end{align}
In other words, we understand $x$ in $(x,\check\bx_j)$ to be placed as the $j$-th component. Similarly for $(x,\hbx_l)$.
\subsection{Proof of Theorem \ref{thm:main}.} We begin by writing the reduced density matrices as products of operators. In the same manner as \cite{sob22_1}, we define the operator $\Psi_K : L^2(\real^{3K}) \to L^2(\real^{3(N-K)})$ by
\begin{align}
\label{eq:psi_k}
(\Psi^{(K)} u)(\hbx) = \int_{\real^{3K}} \psi(\check\bx, \hbx) u(\check\bx)\,d\check\bx, \quad u \in L^2(\real^{3K})
\end{align}
for $\hbx \in \real^{3(N-K)}$. Since $\psi \in L^2$, this is readily seen to be a well-defined Hilbert-Schmidt operator.

By straightforward calculations,
\begin{align*}
\big(\Psi^{(K)}\big)^*\Psi^{(K)} = \Ga^{(K)},\qquad \Psi^{(K)}\big(\Psi^{(K)}\big)^* = \overline{\Ga_{\s_K}^{(N-K)}}
\end{align*}
where $\s_K$ is such that $P_{\s_K}(\hbx, \check\bx) = (\check\bx, \hbx)$ for any $\check\bx \in \real^{3K}$, $\hbx \in \real^{3(N-K)}$, and the complex conjugation of the operator is that of the kernel. By the definition of singular values we then get
\begin{align}
\label{eq:psi_lam}
s_n\big(\Psi^{(K)}\big)^2 = \lambda_n\big(\Ga^{(K)}\big) = \lambda_n\big(\Ga_{\s_K}^{(N-K)}\big),\quad n \in \naturals.
\end{align}
Here we also used that $s_n(T^*) = s_n(T)$ for compact $T$, see Section \ref{chpt:sp}, and that $\lambda_n(T) = \lambda_n(\overline{T})$ for a positive compact integral operator.

It therefore sufficies to study the singular values of $\Psi^{(K)}$. The appropriate inclusion into $S_{q,\infty}$ spaces, as defined in Section \ref{chpt:sp}, is given by the following proposition.
\begin{prop}
\label{prop:psi_k}
Let $2 \le K \le N-2$. Then $\Psi^{(K)} \in S_{q,\infty}$ for $1/q = 1/2 + 7/(6K)$.
\end{prop}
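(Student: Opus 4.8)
The plan is to bound the singular values of $\Psi^{(K)}$ directly, via a Jastrow-type factorization of its kernel combined with the Birman--Solomyak estimates of \cite{bs77}, organised over a Whitney-type decomposition of configuration space. Recall the kernel of $\Psi^{(K)}$ is $\psi(\check\bx,\hbx)$, which I would write as $\psi = e^{\mathcal F}\phi$, where $\mathcal F$ is the explicit Jastrow exponent built from the Coulomb potential $V$ (carried to sufficiently high order in the Fournais--S\o rensen scheme of \cite{fs21}) and $\phi$ is the smoothened factor obeying the enhanced pointwise derivative bounds of Proposition \ref{prop:phi_bound}. A first reduction uses the exponential decay \eqref{eq:exp_decay}: after a dyadic decomposition in $|\bx|'$, the part supported where $|\bx|'$ is large contributes an operator of exponentially small $S_{q,\infty}$-quasinorm, so it suffices to treat $\check\bx$ in a fixed ball (with $\hbx$ localised likewise), a region on which $e^{\mathcal F}$ and the derivatives of $\mathcal F$ that are needed are bounded. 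The hypothesis $2 \le K \le N-2$ enters here only in that the in-block pair singularities $x_j = x_k$, $j,k \le K$, are genuinely present (so the good factor behaves differently from the $K=1$ case, which together with $K=N-1$ is covered, with a better exponent, by Theorem \ref{thm:sobolev}).

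Next I would introduce a Whitney partition of unity $\{\chi_j\}$ in the $\check\bx$ variable adapted to $\Sigma^{(K)}$ --- and, because of the cross-diagonals $\{x_j = x_\ell : j \le K < \ell\}$, whose position depends on $\hbx$, also adapted to that moving coincidence set (equivalently, peeling the corresponding cross Jastrow factor off as a separate, $\hbx$-uniformly controlled multiplier). This expresses $\Psi^{(K)}$ as a sum $\sum_j \Psi^{(K)}_j$ of localized operators with mutually orthogonal initial spaces $L^2(Q_j)$, $Q_j = \supp\chi_j$. On a cube $Q_j$ of side $\ell_j \sim \dist(Q_j,\Sigma^{(K)})$, the bounds of Proposition \ref{prop:phi_bound} should show that the good factor $\phi$, rescaled to the unit cube, lies in a Besov--Nikol'skii space of order $7/2$ in $\check\bx$, with a norm carrying the correct power of $\ell_j$. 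Birman--Solomyak then places the integral operator with kernel $\chi_j\phi$ in $S_{q,\infty}$ with $1/q = 1/2 + (7/2)/(3K) = 1/2 + 7/(6K)$ (the $1/2$ being the Hilbert--Schmidt contribution from $\psi \in L^2$, the $7/(6K)$ the smoothness gain), with quasinorm controlled by that scale-weighted Besov--Nikol'skii norm times an $L^2$ factor in $\hbx$. Reinstating $e^{\mathcal F}$ on each piece: the factors depending only on $\hbx$, or only on $\check\bx$ in an ``outer'' way, enter as bounded weights --- composition with multiplication operators, which does not affect the singular-value decay --- while the genuinely kernel-entangling factors, bounded and H\"older/Besov-regular on the cube, act as multipliers on $S_{q,\infty}$ and likewise preserve the rate. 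Summing over $j$ by a gluing lemma for $S_{q,\infty}$ (the initial spaces being orthogonal, and the dyadic sum of the piecewise quasinorms converging by the Whitney scaling together with the exponential decay) then yields $\Psi^{(K)} \in S_{q,\infty}$ with the stated $q$.

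The step I expect to be hardest is extracting from Proposition \ref{prop:phi_bound} a genuinely scale-uniform Besov--Nikol'skii estimate of order exactly $7/2$: one must balance the blow-up of the higher $\check\bx$-derivatives of $\phi$ (and of $\mathcal F$) near $\Sigma^{(K)}$ against the Whitney cube sizes so that the dyadic sum of Birman--Solomyak contributions converges to a bound of weak type $q$ with precisely $1/q = 1/2 + 7/(6K)$ --- this is where the value $7/2$, and hence the exponent $\al_K = 1 + 7/(3L)$, gets pinned down. Intertwined with this is the handling of the cross-diagonals $x_j = x_\ell$ ($j \le K < \ell$), whose singular set moves with $\hbx$ and so cannot simply be swept into a $\check\bx$-only decomposition, and the verification that the Jastrow factors are admissible multipliers on $S_{q,\infty}$ at the required smoothness level (the Schatten classes $S_p$, $S_{p,\infty}$ and their multipliers being developed in Sections \ref{chpt:sp} and \ref{chpt:besov}). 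Once the scale-uniform Besov--Nikol'skii bound and the weight/multiplier dichotomy are secured, the assembly of the pieces is routine.
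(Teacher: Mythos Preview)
Your overall strategy --- Jastrow factorization of the kernel, Besov--Nikol'skii regularity of the smoothened factor at order $7/2$, Birman--Solomyak bounds, treating the cross-Jastrow factor as an $S_{q,\infty}$-multiplier, and orthogonal gluing --- is exactly the paper's. The difference is the decomposition: you propose a Whitney partition adapted to $\Sigma^{(K)}$ and to the moving cross-diagonals, whereas the paper uses a plain unit lattice $\{Q_{\bn}\}_{\bn\in\integers^{3K}}$, with no scale adaptation whatsoever.

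What makes the simpler route work is a pair of lemmas you are implicitly underestimating. Lemma~\ref{lem:du} says: if a function of a single $\real^3$ variable has $k$-th derivative bounded by $\sum_j |x-a_j|^{\min\{\alpha-k,0\}}$ near finitely many points $a_j$, then its $N^{\alpha+3/2}_2$ norm on any unit ball is bounded \emph{uniformly in the positions of the $a_j$}, even if they sit inside the ball. Lemma~\ref{lem:mb_besov} then upgrades such one-variable estimates to an $N^s_2$ bound on the product $\real^{3K}$. Combined (this is Lemma~\ref{lem:pointwise}) they turn Proposition~\ref{prop:phi_bound} with $\alpha=2$ into the uniform bound $\|\mu(\,\cdot\,,\hbx)\|_{N^{7/2}_2(Q_{\bn})}\le C$ for \emph{every} unit cube $Q_{\bn}$ and \emph{every} $\hbx$. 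The cross-diagonals $x_j=x_l$ ($j\le K<l$) that worried you are absorbed here automatically: for fixed $\hbx$ they merely contribute points $x_l$ to the collection $\{a_j\}$ in Lemma~\ref{lem:du}, and the uniformity in those positions is precisely what allows an $\hbx$-independent decomposition. The exponential decay is never used for a dyadic localization of $\psi$; it is extracted once into the weights $\mathcal A(\check\bx)$, $\mathcal B(\hbx)$ of \eqref{eq:a}--\eqref{eq:b}, and reappears only at the end to make the lattice sum $\sum_{\bn}\|\mathcal A\|^q_{L^r(Q_{\bn})}$ in Proposition~\ref{prop:general} converge.

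Your Whitney route could likely be pushed through, but it introduces exactly the two complications you flag --- an $\hbx$-dependent singular set and a scale-weighted dyadic sum near $\Sigma^{(K)}$ --- both of which evaporate once one sees that the $N^{7/2}_2$ norm on a \emph{fixed-size} cube is already uniform through the singular set.
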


The proof will be given in Section \ref{chpt:psi_k}, but we now use this to prove our main theorem.
\begin{proof}[Proof of Theorem \ref{thm:main}]
By (\ref{eq:psi_lam}), Proposition \ref{prop:psi_k}, and the definition of $S_{q,\infty}$ spaces, see Section \ref{chpt:sp}, we get some $C$ such that
\begin{align*}
\lambda_n\big(\Ga^{(K)}\big),\, \lambda_n\big(\Ga_{\s_K}^{(N-K)}\big) \le Cn^{-(1+7/(3K))}
\end{align*}
for all $n \ge 1$. Since we could have taken any permuation of the variables of $\psi$ in defining $\Psi^{(K)}$, it follows that we can freely replace $\Ga_{\s_K}^{(N-K)}$ by $\Ga^{(N-K)}$ in the above bound. This completes the proof.
\end{proof}

\section{Factorization of $\psi$}
It is known that multiplication by a so-called Jastrow factor can improve the smoothness of solutions $\psi$ to (\ref{eq:se}), see, for example, \cite{hos01}, \cite{fhos05}, \cite{fs21}. We follow these authors to define, for $\bx \in \real^{3N}$,
\begin{align}
\label{eq:f_def}
F(\bx) = -\frac{Z}{2}\sum_{1 \le j \le N}\tau(x_j) + \frac{1}{4}\sum_{1\le j<k \le N}\tau(x_j-x_k),
\end{align}
where
\begin{align}
\label{eq:tau}
\tau(x) = |x| - (1+|x|^2)^{1/2}, \quad x \in \real^3.
\end{align}
Using this, we set
\begin{align}
\label{eq:phi_def}
\phi = e^{-F}\psi.
\end{align}

The function $\phi$ lies in $C^{1,\al}(\real^{3N})$ for every $\al \in [0,1)$. In \cite{fhos05} it was shown that an additional term may be added to $F$ in (\ref{eq:f_def}) to get $C^{1,1}$ regularity, which is equivalent to the first two derivatives being locally bounded. We're interested in pointwise bounds to the derivatives of $\phi$. The first result of this kind was from \cite{fs21}, and this method was adapted in \cite{hs22} and \cite{he}. We require a slightly different result but rather than go through the method in full, since it is fairly long, we instead obtain the following bounds by using elements of \cite{hs22} and applying \cite{fhos05}. This is presented in Appendix \ref{chpt:app}.
\begin{prop}
\label{prop:phi_bound}
Let $\kappa$ be such that (\ref{eq:exp_decay}) holds and suppose $\phi$ is defined as in (\ref{eq:phi_def}). Let $j \in \{1,\dots,N\}$. Then for each $m \in \naturals_0^3$ there exists $C_{m}$, which may depend on $\psi$, such that
\begin{align}
\label{eq:phi_bound}
|\partial^{m}_{x_j}\phi(\bx)| \le C_{m}e^{-\kappa |\bx|'} \Big(1+|x_j|^{\min\{2-|m|,\,0\}} + \sum_{\substack{1 \le k \le N \\ k \ne j}}|x_j-x_k|^{\min\{2-|m|,\,0\}}\Big)
\end{align}
for all $\bx \in \real^{3N}\backslash\Sigma$.
\end{prop}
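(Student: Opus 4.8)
The plan is to establish (\ref{eq:phi_bound}) from the second-order elliptic equation satisfied by $\phi$, by an induction on $|m|$ that bootstraps the sharp $C^{1,1}$-regularity of \cite{fhos05} into control of all pure $x_j$-derivatives. Since $\psi = e^{F}\phi$ solves $(-\Delta + V)\psi = E\psi$, expanding $\Delta(e^{F}\phi)$ gives
\begin{align*}
-\Delta\phi = 2\nabla F\cdot\nabla\phi + c\,\phi \qquad\text{on }\real^{3N}\backslash\Sigma,\qquad c := \Delta F + |\nabla F|^{2} - V + E .
\end{align*}
The choice of $F$ in (\ref{eq:f_def})--(\ref{eq:tau}) is tailored so that these coefficients carry no Coulomb singularity: from $\Delta_{x}|x| = 2|x|^{-1}$ on $\real^{3}$ and $(\Delta_{x_j} + \Delta_{x_k})|x_j - x_k| = 4|x_j - x_k|^{-1}$ one checks that $\Delta F - V$ is bounded on $\real^{3N}$, and $\nabla F$ is bounded since $\nabla\tau$ is; hence $c \in L^{\infty}(\real^{3N})$. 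Moreover $\nabla F$ and $c$ are real-analytic in $x_j$ away from $\{x_j = 0\}$ and the $\{x_j = x_k\}$, $k\neq j$, so for a multi-index $\beta$ in the $x_j$-slot one has $|\partial^{\beta}_{x_j}\nabla F(\bx)| + |\partial^{\beta}_{x_j}c(\bx)| \le C_{\beta}\,d_j(\bx)^{-|\beta|}$, where $d_j(\bx) := \min\{|x_j|,\min_{k\neq j}|x_j - x_k|\}$; every remaining singularity of the coefficients lies on a hyperplane $\{x_k = x_l\}$ with $k,l \neq j$. This is the structural reason why only $|x_j|$ and the $|x_j - x_k|$ occur in (\ref{eq:phi_bound}).

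Two ingredients are combined. The first is the pointwise derivative bounds of \cite{fs21}, in the form adapted in \cite{hs22} — the quantitative statement that multiplication by $e^{-F}$ gains \emph{one} derivative of regularity, so that for every multi-index $\alpha$ there is a bound $|\partial^{\alpha}\phi(\bx)| \le C_{\alpha}e^{-\kappa|\bx|'}(\cdots)$ whose singular part is ``one power better than naive''. The second is the sharp $C^{1,1}$-regularity of \cite{fhos05}, which is ``two powers better'' but a priori only for second derivatives. The induction on $|m|$ promotes the second ingredient to all orders along $x_j$. For $|m|\le 2$ the claim (up to the weight) is that $\phi$, $\nabla_{x_j}\phi$ and $\nabla^{2}_{x_j}\phi$ are bounded on $\real^{3N}\backslash\Sigma$, which is the $C^{1,1}$-result of \cite{fhos05}, the auxiliary term added there to $F$ having bounded pure $x_j$-derivatives up to order two.

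For the inductive step, fix $\bx_0 \notin \Sigma$, put $r := d_j(\bx_0)$ and work on the Euclidean ball $B = B(\bx_0,\rho) \subset \real^{3N}$ with $\rho := c_0\min\{r,1\}$, $c_0$ a small dimensional constant, chosen so that $B$ meets $\Sigma$ only along hyperplanes $\{x_k = x_l\}$, $k,l\neq j$; these have codimension three and are therefore removable for the bounded solution $u := \partial^{m'}_{x_j}\phi$ (where $m'$ is $m$ with one unit removed) of the equation obtained by differentiating the equation for $\phi$ by $\partial^{m'}_{x_j}$. That equation for $u$ has the same principal part $-\Delta$ and first-order term $2\nabla F$, plus an inhomogeneity $h$ that is a finite sum of terms $(\partial^{\beta}_{x_j}\nabla F)\cdot\nabla\partial^{m'-\beta}_{x_j}\phi$ and $(\partial^{\beta}_{x_j}c)\,\partial^{m'-\beta}_{x_j}\phi$ with $1\le|\beta|\le|m'|$. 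On $B$ the coefficient factors are $O(\rho^{-|\beta|})$; the pure $x_j$-factors $\partial^{m'-\beta}_{x_j}\phi$ and $\partial_{x_j}\partial^{m'-\beta}_{x_j}\phi$ are $O(\rho^{\,2-(|m|-|\beta|)})$ by the inductive hypothesis; and the only factors not of pure $x_j$-type, the $\partial_{x_i}\partial^{m'-\beta}_{x_j}\phi$ with $i\neq j$, are $O(\rho^{\,1-(|m|-|\beta|)})$ by the first ingredient — altogether $\|h\|_{L^{\infty}(B)} = O(\rho^{\,1-|m|})$. Since also $\|u\|_{L^{\infty}(B)} = O(\rho^{\,3-|m|})$, interior elliptic $W^{2,p}$-estimates on $B$ (rescaled to unit size, $p$ large) yield $|\partial^{m}_{x_j}\phi(\bx_0)| \lesssim \rho^{-1}\|u\|_{L^{\infty}(B)} + \rho\,\|h\|_{L^{\infty}(B)} = O(\rho^{\,2-|m|})$; as $\rho \gtrsim \min\{r,1\}$ and $r^{\,2-|m|} \le |x_j|^{\,2-|m|} + \sum_{k\neq j}|x_j - x_k|^{\,2-|m|}$ for $|m|\ge 2$, this is exactly (\ref{eq:phi_bound}). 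The exponential weight is carried along at every step because $|\bx|' = |x_1| + \cdots + |x_N|$ is Lipschitz for the Euclidean norm, so $e^{-\kappa|\bx|'}$ is comparable up to a fixed constant over any ball of radius $\le 1$; hence the decay (\ref{eq:exp_decay}) of $\psi$, and of $\phi$, is inherited by all derivatives through the interior estimates, the constant absorbed into $C_m$.

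The main obstacle is the interplay between the two ingredients near $\Sigma$: the ``one-derivative-gained'' bounds must be available for all multi-indices and arranged so that the $\partial_{x_i}\partial^{m'-\beta}_{x_j}\phi$ in $h$ stay controlled on $B(\bx_0,\rho)$ even when $B$ lies arbitrarily close to the $\{x_k = x_l\}$-part of $\Sigma$, and the interior estimate must be run as an estimate \emph{up to} that part of $\Sigma$, using its codimension-three removability for bounded solutions together with the analyticity of the coefficients in $x_j$. Making this bookkeeping precise — and pinning down the ``two-powers-better'' base case in the pure $x_j$-directions from the $C^{1,1}$-result of \cite{fhos05} — is the technical content, which is why the estimate is assembled from \cite{hs22} and \cite{fhos05} in Appendix \ref{chpt:app} rather than proved from scratch.
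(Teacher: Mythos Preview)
Your overall strategy matches the paper's: use the $C^{1,1}$ result of \cite{fhos05} for $|m|\le 2$, then bootstrap via scaled interior elliptic estimates on balls of radius $\rho\sim d_j(\bx_0)$. The paper packages the bootstrap as a black box (Proposition~\ref{prop:u}, quoted from \cite[Theorem~3.2]{hs23}), whose only inputs are $\|\phi\|$, $\|\nabla\phi\|$, and $\max_{|\alpha|\le 2}\|\partial_{x_j}^{\alpha}\phi\|$ on the ball --- all globally bounded by \cite{fhos05} --- and whose output is the $l^{2-|m|}$ bound directly, with no appeal to \cite{fs21} at all.

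Your explicit induction unpacks that black box, but one step does not go through as written. The mixed factors $\partial_{x_i}\partial_{x_j}^{m'-\beta}\phi$ cannot be bounded by $O(\rho^{\,1-(|m|-|\beta|)})$ on $B$ using the ``first ingredient'' from \cite{fs21}/\cite{hs22}. Those pointwise estimates control $\partial^{\alpha}\phi$ through the distances to the singularities of the \emph{differentiated} particles; since $\rho\sim d_j(\bx_0)$ does not prevent $B$ from meeting $\{x_i=0\}$ or $\{x_i=x_k\}$ with $k\neq i,j$ (not just the $\{x_k=x_l\}$, $k,l\neq j$, that you mention), the imported bound on $\partial_{x_i}\partial_{x_j}^{m'-\beta}\phi$ blows up there and $\|h\|_{L^{\infty}(B)}=O(\rho^{1-|m|})$ fails. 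The fix --- and what \cite[Theorem~3.2]{hs23} does internally --- is not to import these mixed terms from outside, but to \emph{carry $\nabla\partial_{x_j}^{m''}\phi$ in the induction hypothesis}: the $W^{2,p}$ estimate at step $|m''|$ already produces $D^{2}\partial_{x_j}^{m''-1}\phi$, hence $\partial_{x_i}\partial_{x_j}^{m''}\phi$, with a bound depending only on $d_j$, uniformly up to the codimension-three pieces of $\Sigma$ inside $B$. With that strengthened hypothesis your scheme closes and becomes precisely the content of Proposition~\ref{prop:u}.
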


It will be convenient for us to extract the exponential decay. In particular, define
\begin{align}
\label{eq:mu_def}
\mu(\bx) = \phi(\bx) \prod_{l=1}^N e^{\kappa (1+|x_l|^2)^{1/2}}.
\end{align}
Using $m=0$ in (\ref{eq:phi_bound}) we find that
\begin{align*}
|\mu(\bx)| \le 3C_0 \prod_{l=1}^N e^{-\kappa \tau(x_l)}
\end{align*}
which is bounded in $\real^{3N}$ since $\tau \in L^{\infty}(\real^3)$. Let $j \in \{1, \dots, N\}$. Then we can continue, by straightforward calculations, to obtain, for any $m \in \naturals^3_0$, some $C'_m$ such that
\begin{align}
\label{eq:mu}
|\partial^m_{x_j}\mu(\bx)| \le C'_{m}\Big(1+|x_j|^{\min\{2-|m|,\,0\}} + \sum_{\substack{1 \le k \le N \\ k \ne j}}|x_j-x_k|^{\min\{2-|m|,\,0\}}\Big)
\end{align}
for all $\bx \in \real^{3N}\backslash\Sigma$.

We now proceed to find an appropriate factorisation of $\psi$. For every $1 \le K \le N-1$ we will use the same $\mu$ which will be the part of $\psi$ with the greatest smoothness. This is what will determine the singular value estimates for $\Psi^{(K)}$, defined in (\ref{eq:psi_k}), at least for $K \ge 2$. The remaining factors are partitioned based on $K$ in the following manner.

Let $1 \le K \le N-1$. As in (\ref{not:x1}), we write $\check\bx = (x_1, \dots, x_K)$, $\hbx = (x_{K+1}, \dots, x_N)$. Then we set
\begin{align}
\label{eq:a}
\mathcal{A}(\check\bx) &= \exp\Big(-\frac{Z}{2}\sum_{l=1}^K \tau(x_l) + \frac{1}{4}\sum_{l=1}^{K-1}\sum_{m=l+1}^K \tau(x_l-x_m)\Big) \prod_{l=1}^K e^{-\kappa (1+|x_l|^2)^{1/2}},\\
\label{eq:b}
\mathcal{B}(\hbx) &= \exp\Big(-\frac{Z}{2}\sum_{l=K+1}^N \tau(x_l) + \frac{1}{4}\sum_{l = K+1}^{N-1} \sum_{m=l+1}^N \tau(x_l-x_m)\Big)\prod_{l=K+1}^N e^{-\kappa (1+|x_l|^2)^{1/2}},\\
\label{eq:c}
\mathcal{C}(\check\bx, \hbx) &= \exp\Big(\frac{1}{4}\sum_{l=1}^K \sum_{m=K+1}^N\tau(x_l-x_m)\Big).
\end{align}
By (\ref{eq:phi_def}) and (\ref{eq:mu_def}) we then obtain
\begin{align}
\label{eq:psi_decomp}
\psi(\check\bx, \hbx) = \mathcal{A}(\check\bx)\mathcal{B}(\hbx) \mathcal{C}(\check\bx, \hbx) \mu(\check\bx, \hbx).
\end{align}
With this decomposition of the kernel of $\Psi^{(K)}$ we will apply the Birman-Solomyak bounds of singular values, Proposition \ref{prop:bs}, applied to the kernel $\mu$ with weights $\mathcal{A}$ and $\mathcal{B}$. The function $\mathcal{C}$ will be seen to act as a ``multiplier'' in the sense of Section \ref{chpt:mult}, and will not affect the obtained singular value bounds.

\section{Schatten classes and integral operators}
\label{chpt:sp}
We now introduce singular values and Schatten classes of compact operators, for more details see \cite[Chapter 11]{bs87}. Let $\mathcal{H}_1, \mathcal{H}_2$ be separable Hilbert spaces and $T : \mathcal{H}_1 \to \mathcal{H}_2$ a compact operator. The singular values of $T$ are $s_n(T) = \lambda_n(T^*T)^{1/2}$, $n \ge 1$, where $\lambda_n(T^*T)$ are the non-zero eigenvalues of $T^*T$ arranged non-increasing and counting multiplicity. By the Schmidt representation of a compact operator we get $s_n(T^*) = s_n(T)$ for all $n$. For $0<q<\infty$, the $q$\textit{-th Schatten class} is the space $S_q = S_q(\mathcal{H}_1, \mathcal{H}_2)$ defined as all compact $T$ such that the quantity
\begin{align*}
\norm{T}_q = \Big(\sum_n s_n(T)^q\Big)^{1/q}
\end{align*}
is finite. This functional defines a norm on $S_q$ for $q \ge 1$ and in this case the spaces are Banach. For $0<q<\infty$ we can also define the space $S_{q,\infty} = S_{q,\infty}(\mathcal{H}_1, \mathcal{H}_2)$ as all compact $T$ for which $s_n(T) = O(n^{-1/q})$ as $n \to \infty$. On such spaces we can define the functional 
\begin{align}
\label{eq:spinf}
\norm{T}_{q,\infty} = \sup_n n^{1/q} s_n(T).
\end{align}

We now state some inclusions. Suppose $T \in S_p$, then $T \in S_{p,\infty}$ with
\begin{align}
\label{eq:sp_incl0}
\norm{T}_{p,\infty} &\le \norm{T}_p,
\end{align}
and for any $q>p$ we have $T \in S_q$ with
\begin{align}
\label{eq:sp_incl}
\norm{T}_q \le \norm{T}_p.
\end{align}
Furthermore, if $T \in S_{p,\infty}$ then for any $q>p$ we have $T \in S_q$ and
\begin{align}
\label{eq:sp_incl2}
\norm{T}_{q} \le C(q,p)\norm{T}_{p, \infty}.
\end{align}

We use the following triangle-type inequality which will be sufficient for our purposes. For more details see \cite[Chapter 11]{bs87}, \cite[Chapter 1]{bs77} and \cite{alek02}.
\begin{prop}\cite[Lemma 1.1]{bs77}
\label{prop:triangle}
Let $q \in (0,2)$ and let $T_j \in S_{q,\infty}$, $j=1,\dots$, be a family of operators satisfying either $T_j^* T_k = 0$ for all $j \ne k$, or $T_j T^*_k = 0$ for all $j \ne k$.
Then $T = \sum_j T_j \in S_{q,\infty}$ and
\begin{align*}
\norm{T}^q_{q,\infty} \le \frac{2}{2-q}\sum_j \norm{T_j}^q_{q,\infty},
\end{align*}
under the assumption the right-hand side is finite.
\end{prop}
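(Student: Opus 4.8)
The plan is to reduce the inequality to a sub-additivity property of the eigenvalue counting function for sums of \emph{positive} operators, and then to obtain that property from Weyl's inequality together with a careful allocation of thresholds. First I would remove the cross terms. Assume $T_j^*T_k=0$ for all $j\ne k$; the case $T_jT_k^*=0$ is handled identically with $TT^*$ in place of $T^*T$. Under this hypothesis every cross term in $T^*T=\big(\sum_j T_j\big)^*\big(\sum_k T_k\big)$ vanishes, so $T^*T=\sum_j A_j$ with $A_j:=T_j^*T_j\ge 0$. Since $s_n(T)^2=\lambda_n(T^*T)$, $s_n(T_j)^2=\lambda_n(A_j)$, and $\|A_j\|_{q/2,\infty}=\|T_j\|_{q,\infty}^2$, it is enough to prove that $\sum_j A_j\in S_{q/2,\infty}$ with $\big\|\sum_j A_j\big\|_{q/2,\infty}^{q/2}\le\tfrac{2}{2-q}\sum_j\|A_j\|_{q/2,\infty}^{q/2}$. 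Thus the orthogonality is fully spent and what remains is a statement about the weak Schatten quasi-norm, of exponent $q/2\in(0,1)$, of a sum of positive operators.

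Next, for a positive compact operator $A$ write $n(t;A)=\#\{m:\lambda_m(A)>t\}$, so $A\in S_{p,\infty}$ exactly when $n(t;A)\le\|A\|_{p,\infty}^{p}t^{-p}$ for all $t>0$, the supremum being attained. Weyl's inequality $\lambda_{m+l-1}(B+C)\le\lambda_m(B)+\lambda_l(C)$ for positive $B,C$, iterated over finitely many summands, gives
\[
n\Big(\sum_j t_j\,;\,\sum_j A_j\Big)\ \le\ \sum_j n(t_j;A_j),\qquad t_j>0,
\]
which, with $t_j=\sigma_j^2$, reads $n\big((\sum_j\sigma_j^2)^{1/2};T\big)\le\sum_j n(\sigma_j;T_j)$. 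Now fix $s>0$, use $n(\sigma_j;T_j)\le\|T_j\|_{q,\infty}^{q}\sigma_j^{-q}$, split the index set into dyadic blocks according to the size of $\|T_j\|_{q,\infty}$ relative to $s$, and distribute the budget $\sum_j\sigma_j^2\le s^2$ among the blocks in geometrically decreasing proportions; summing the geometric series $\sum_{k\ge 0}(q/2)^k=\tfrac{2}{2-q}$ (where $q<2$ enters) yields $n(s;T)\le\tfrac{2}{2-q}s^{-q}\sum_j\|T_j\|_{q,\infty}^{q}$. This is proved first for finite families; for a countable family one observes that, by the orthogonality of ranges (or of coranges), the partial sums $\sum_{j\le J}T_j$ form a Cauchy sequence in Hilbert–Schmidt norm — since $\sum_j\|T_j\|_{\mathrm{HS}}^2\le\zeta(2/q)\sum_j\|T_j\|_{q,\infty}^2<\infty$ — hence converge to $T$, and singular values are continuous, so the bound passes to the limit once its right-hand side is known to be finite. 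By the definition of $\|\cdot\|_{q,\infty}$ in Section~\ref{chpt:sp} and by (\ref{eq:spinf}) this is exactly the asserted inequality.

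The delicate point is the threshold step. Weyl's inequality already shows cheaply that $T$ lies in \emph{some} weak Schatten class — indeed, optimizing a single threshold $\sigma_j$ per operator only gives inclusion in $S_{r,\infty}$ with $r=2q/(q+2)<q$ — so extracting the exponent $q$ together with the precise constant $\tfrac{2}{2-q}$ forces the dyadic organisation of the indices and the careful bookkeeping of the resulting geometric series; this quantitative part is the content of \cite[Lemma~1.1]{bs77}, which I would simply follow. The passage to a countable family is routine once the Hilbert–Schmidt convergence above is noted, and finiteness of the right-hand side is part of the hypothesis.
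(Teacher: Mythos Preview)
The paper does not supply its own proof of this proposition: it is stated with the citation \cite[Lemma~1.1]{bs77} and used as a black box. So there is nothing in the paper to compare your argument against; you are effectively reconstructing (or re-citing) the Birman--Solomyak proof.

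Your reduction is correct and is the standard one: the hypothesis $T_j^*T_k=0$ (equivalently, orthogonality of the ranges) kills the cross terms so that $T^*T=\sum_j T_j^*T_j$, and the problem becomes the $p$-subadditivity of $\|\cdot\|_{p,\infty}$, $p=q/2\in(0,1)$, on sums of positive compact operators. The counting-function reformulation and the Ky~Fan/Weyl inequality $n(\sum_j t_j;\sum_j A_j)\le\sum_j n(t_j;A_j)$ are also right, as is the passage from finite to countable families via Hilbert--Schmidt convergence.

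The part that is not really a proof is the ``dyadic blocks'' paragraph. As you yourself note, the naive Lagrange-type optimisation of the thresholds $t_j$ subject to $\sum t_j=s$ produces $(\sum_j c_j^{1/(p+1)})^{p+1}s^{-p}$, which need not be controlled by $(\sum_j c_j)s^{-p}$ for infinite families; the integer-valuedness of the counting functions (so that $n(t_j;A_j)=0$ once $t_j\ge\lambda_1(A_j)$) has to be exploited, and your description of how the geometric series $\sum_{k\ge0}(q/2)^k$ actually emerges is too schematic to be checked. Since you then say ``this quantitative part is the content of \cite[Lemma~1.1]{bs77}, which I would simply follow,'' the proposal effectively reduces to the same citation the paper makes. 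That is fine for the purposes of this paper, but it is not an independent proof. (A small side remark: your parenthetical that a single-threshold optimisation yields $S_{r,\infty}$ with $r=2q/(q+2)<q$ points in the wrong direction---smaller $r$ would be a \emph{stronger} inclusion---so that sentence should be reconsidered or dropped.)
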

\subsection{Integral operators.}
Let $X \subset \real^d$, $Y \subset \real^m$. Let $T : Y \times X \to \complex$, $b : Y \to \complex$ and $a : X \to \complex$. Then, formally we define $\intt(bTa) : L^2(X) \to L^2(Y)$ as
\begin{align}
\label{eq:io_def}
\intt(bTa)u(y) = \int_X b(y)T(y,x)a(x)u(x)\,dx.
\end{align}
If $a=b\equiv 1$ we will simply write $\intt(T)$. It is well known that if $T \in L^2(Y \times X)$ then $\intt(T) \in S_2$. By imposing a smoothness condition on $T$, one may then obtain inclusion in the more refined spaces $S_{q,\infty}$ for $q<2$.  This is seen in the following proposition, due to M.S. Birman and M.Z. Solomyak. The statement was obtained from \cite[Corollaries 4.1 and 4.5, Theorems 4.4 and 4.7]{bs77}. It involves the Besov-Nikol'skii spaces, $N_2^s$, introduced in the next section.
\begin{prop}
\label{prop:bs}
Let $X \subset \real^d$ be a bounded Lipschitz domain. Assume that the kernel $T(y,x)$, $y \in \real^m$, $x \in X$, is such that $T(y,\,\cdot\,) \in N^s_2(X)$ with some $s>0$, for a.e. $y \in \real^m$. Assume that $b \in L^2_{loc}(\real^m)$ and that $a \in L^r(X)$, where
\begin{align*}
\begin{cases}
r=2, &\text{if } 2s>d\\
r>d/s, &\text{if } 2s\le d.
\end{cases}
\end{align*}
Then $\intt(bTa) : L^2(X) \to L^2(\real^m)$ lies in $S_{q,\infty}$ where $1/q = 1/2 + s/d$ and
\begin{align*}
\norm{\intt(bTa)}_{q,\infty} \le C\Big(\int_{\real^m} \norm{T(y,\,\cdot\,)}^2_{N^s_2(X)}|b(y)|^2 \,dy \Big)^{1/2} \norm{a}_{L^r(X)}
\end{align*}
under the assumption the right-hand side is finite.
\end{prop}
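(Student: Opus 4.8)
The plan is to reconstruct this as the Birman–Solomyak singular-value estimate, via their piecewise-polynomial approximation scheme; I describe the structure and where the hypotheses enter. First I would set the weight $b$ aside: since $\intt(bTa)u(y)=b(y)\,\big(\intt(Ta)u\big)(y)$, the factor $b$ acts only on the range, and the $y$-integration commutes with everything below, so it is enough to carry the factor $A_0:=\big(\int_{\real^m}\norm{T(y,\,\cdot\,)}_{N^s_2(X)}^2|b(y)|^2\,dy\big)^{1/2}$ through to the end. Since $X$ is a bounded Lipschitz domain, $N^s_2(X)$-functions extend to $N^s_2$ of an enclosing cube with norm control, so I may assume $X$ is a cube.

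To see the mechanism, take first $a\in L^\infty(X)$. For each $k\ge 0$ subdivide $X$ into $\asymp 2^{kd}$ congruent subcubes of side $\asymp 2^{-k}$ and let $\Pi_k$ be the $L^2(X)$-orthogonal projection onto functions that are polynomials of degree $<\ell$ ($\ell>s$ fixed) on each subcube; the defining property of the Besov–Nikol'skii scale gives $\norm{f-\Pi_k f}_{L^2(X)}\le C\,2^{-ks}\norm{f}_{N^s_2(X)}$. With $\Delta_k:=\Pi_k-\Pi_{k-1}$ ($\Pi_{-1}:=0$) one has $\intt(bTa)=\sum_{k\ge 0}D_k$, $D_k:=\intt(b\,\Delta_k T\,a)$, where $D_k$ has rank $\le C\,2^{kd}$ (its kernel is spanned in $x$ by the $\asymp 2^{kd}$ functions $\1_Q\,x^\beta$) and, using $|a|\le\norm{a}_{L^\infty(X)}$ pointwise together with the approximation estimate,
\[
\norm{D_k}_{S_2}^2=\int_{\real^m}|b(y)|^2\!\int_X|\Delta_k T(y,x)|^2|a(x)|^2\,dx\,dy\le C\,2^{-2ks}\,\norm{a}_{L^\infty(X)}^2\,A_0^2 .
\]
I would then sum by an elementary dyadic splitting of the sum (which, unlike Proposition \ref{prop:triangle}, needs no orthogonality of the blocks): write $\intt(bTa)=P_j+R_j$ with $P_j:=\sum_{k<j}D_k$ of rank $\le C\,2^{jd}$ and $R_j:=\sum_{k\ge j}D_k$, so $\norm{R_j}_{S_2}\le C\,2^{-js}\norm{a}_{L^\infty(X)}A_0$; choosing $j$ with $2^{jd}\asymp n$ and using $s_n(P_j+R_j)\le s_{\lfloor n/2\rfloor+1}(P_j)+s_{\lceil n/2\rceil}(R_j)$, $s_{\lfloor n/2\rfloor+1}(P_j)=0$, and $s_m(R_j)\le\norm{R_j}_{S_2}\,m^{-1/2}$ yields $s_n\big(\intt(bTa)\big)\le C\,n^{-(1/2+s/d)}\norm{a}_{L^\infty(X)}A_0$. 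Thus $\intt(bTa)\in S_{q,\infty}$ with $1/q=1/2+s/d$, with the claimed bound but with $\norm{a}_{L^\infty(X)}$ in place of $\norm{a}_{L^r(X)}$.

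The real work is to weaken the weight to $a\in L^r(X)$ with the stated sharp integrability ($r=2$ when $2s>d$, where $N^s_2(X)\hookrightarrow L^\infty(X)$ makes the $L^\infty$-estimates available, and $r>d/s$ when $2s\le d$). A uniform mesh no longer works: where $a$ concentrates, a piecewise polynomial at scale $2^{-k}$ can carry $L^2(|a|^2\,dx)$-mass far larger than $2^{-ks}$. The remedy, due to Birman and Solomyak, is to use at each level a \emph{variable} mesh refined according to $a$ — crudely, so that $\int_Q|a|^2$ is comparable across subcubes $Q$ — whose cardinality is controlled by $\norm{a}_{L^r(X)}$ exactly under the stated condition on $r$; the blocks then satisfy a rank bound $\le Cn$ and an $S_2$-bound $\le C\,n^{-s/d}\norm{a}_{L^r(X)}A_0$ at the matching level, and the same summation gives $1/q=1/2+s/d$. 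This variable-mesh construction, and the sharpness of the exponent $r=d/s$, is the delicate point; it is exactly what the cited results of Birman and Solomyak (\cite[Corollaries 4.1 and 4.5, Theorems 4.4 and 4.7]{bs77}) provide, and I would invoke them to finish.
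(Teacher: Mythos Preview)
The paper does not prove this proposition: it is stated as a known result, obtained directly from \cite[Corollaries 4.1 and 4.5, Theorems 4.4 and 4.7]{bs77}, with no argument supplied. Your proposal is a correct expository sketch of the Birman--Solomyak piecewise-polynomial approximation mechanism (uniform dyadic meshes for $a\in L^\infty$, variable meshes adapted to $a$ for the general $L^r$ case), and you ultimately invoke the same citation for the delicate variable-mesh step; so in substance you and the paper agree, with your version simply unpacking what the paper leaves as a black-box reference.
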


\section{Besov-Nikol'skii spaces}
\label{chpt:besov}
Take a function $u \in L^q(\real^d)$, $1 \le q \le \infty$, and let $h \in \real^d$. Then define the \textit{finite difference of order 1} as
\begin{align*}
\Delta_h^{(1)}u(x) = u(x+h)-u(x), \qquad x \in \real^d.
\end{align*}
For $l=1,2,\dots$ we then define the \textit{finite difference of order }$l$, denoted by $\Delta_h^{(l)}$, as $l$ successive applications of $\Delta_h^{(1)}$. This produces the formula
\begin{align*}
\Delta_h^{(l)}u(x) = \sum_{j=0}^l (-1)^{j+l} {l \choose j} u(x + jh).
\end{align*}
Additionally, we take $\Delta_h^{(0)}$ as the identity. 

Since $\norm{\Delta^{(1)}_h u}_q \le 2\norm{u}_q$, it is immediate that
\begin{align}
\label{eq:reduce_fd}
\norm{\Delta^{(k)}_h u}_q = \norm{\Delta^{(k-l)}_h \Delta^{(l)}_h u}_q \le 2^{k-l}\norm{\Delta^{(l)}_h u}_q
\end{align}
for integers $k>l$.

We now introduce the \textit{Besov-Nikol'skii} spaces $N^s_q$, which exist within the spectrum of Besov spaces, and in this setting are commonly labelled $B^s_{q,\infty}$. Further details on these spaces may be found in \cite{bs77}, \cite{adams}, \cite{herz68}, \cite{bergh}. Let $s > 0$. A function $u$ is said to lie in $N^s_q(\real^d)$, $1 \le q \le \infty$, if $u \in L^q(\real^d)$ and there is some integer $l > s$ such that
\begin{align*}
[u]_{s,l} := \sup_{h \in \real^d,\, h \ne 0}|h|^{-s}\norm{\Delta^{(l)}_h u}_q < \infty.
\end{align*}
In fact, by the following lemma this shows $[u]_{s,l}$ will be finite for all integers $l>s$. A short proof can be found in \cite[Lemma 1.1]{herz68}.
\begin{lem}
\label{lem:ns}
Let $u \in L^q(\real^d)$, $1 \le q \le \infty$, and suppose $[u]_{s,l} < \infty$ for some integer $l>s$. Then for any other integer $m>s$ we have $C$, depending only on $s,l$ and $m$, such that
\begin{align}
\label{eq:ns_equiv}
[u]_{s,m} \le C[u]_{s,l}.
\end{align}
\end{lem}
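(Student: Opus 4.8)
The statement to prove is Lemma~\ref{lem:ns}: if $[u]_{s,l} < \infty$ for one integer $l > s$, then $[u]_{s,m} < \infty$ for every integer $m > s$, with the quantitative comparison \eqref{eq:ns_equiv}. The plan is to treat the two cases $m > l$ and $m < l$ separately, both reducing to an identity expressing one finite difference operator in terms of translates of another.

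First, suppose $m > l$. Here I would use inequality \eqref{eq:reduce_fd}, which is already recorded in the excerpt: writing $\Delta_h^{(m)} = \Delta_h^{(m-l)}\Delta_h^{(l)}$ and using $\|\Delta_h^{(1)}v\|_q \le 2\|v\|_q$ repeatedly gives $\|\Delta_h^{(m)}u\|_q \le 2^{m-l}\|\Delta_h^{(l)}u\|_q$, hence $[u]_{s,m} \le 2^{m-l}[u]_{s,l}$. This direction is essentially free.

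The substantive case is $m < l$. The key algebraic fact is that the $l$-th order finite difference with step $h$ can be rebuilt from $m$-th order differences with steps that are integer multiples of $h$: concretely, $\Delta_h^{(l)} = \Delta_h^{(l-m)}\Delta_h^{(m)}$, and one expands the outer operator $\Delta_h^{(l-m)} = \sum_{j=0}^{l-m}(-1)^{l-m-j}\binom{l-m}{j}\,\mathrm{(translation\ by\ }jh)$, so that $\Delta_h^{(l)}u(x) = \sum_{j=0}^{l-m}(-1)^{l-m-j}\binom{l-m}{j}\,(\Delta_h^{(m)}u)(x+jh)$. Taking $L^q$ norms and using translation invariance of $\|\cdot\|_q$ gives $\|\Delta_h^{(l)}u\|_q \le \big(\sum_j \binom{l-m}{j}\big)\|\Delta_h^{(m)}u\|_q = 2^{l-m}\|\Delta_h^{(m)}u\|_q$ — but this is the \emph{wrong direction}: it bounds $[u]_{s,l}$ by $[u]_{s,m}$, not the reverse. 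So the real work is to go the other way, and the standard device is a dyadic/telescoping argument: express $\Delta_h^{(m)}u$ in terms of $\Delta_{2h}^{(m)}u$ (or more generally $\Delta_{2^k h}^{(m)}u$) plus lower-complexity error terms controlled by $\Delta_h^{(l)}$, then sum a geometric series in $|h|^s$ that converges precisely because $l > s$. One clean way: there is an identity of the form $\Delta_{2h}^{(l)}u = \sum (\text{binomial coeffs})\,\Delta_h^{(l)}u(\cdot + \text{shifts})$, from which $\|\Delta_{2h}^{(l)}u\|_q \le 2^l\|\Delta_h^{(l)}u\|_q$ trivially, and conversely a Marchaud-type inequality allows one to bound $\|\Delta_h^{(m)}u\|_q$ by a convergent sum $\sum_{k\ge 0} 2^{-k(l-s)}|h|^s[u]_{s,l}$ once one knows $\|\Delta_h^{(l)}u\|_q$ controls the "oscillation at scale $|h|$". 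I would cite or reproduce the short argument from \cite[Lemma 1.1]{herz68} here: the point is that $\Delta_h^{(m)} - $ (a suitable average of $\Delta_{2h}^{(m)}$) is itself an operator built from $\Delta_h^{(l)}$-type differences, iterating this identity down the dyadic scale and summing.

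\textbf{Main obstacle.} The genuine difficulty — and the only non-bookkeeping step — is establishing the Marchaud-type inequality that lets one pass from larger-order differences to the smaller-order one ($m < l$) while keeping the constant finite; this is exactly where the hypothesis $l > s$ is used, since it makes the dyadic sum $\sum_k 2^{-k(l-s)}$ converge. Everything else (the case $m > l$, translation invariance, the binomial expansions) is routine. Since the excerpt explicitly points to \cite[Lemma 1.1]{herz68} for a short proof, I would present the dyadic telescoping identity cleanly, verify the geometric-series convergence under $l > s$, and keep track of the resulting constant $C = C(s,l,m)$, noting it depends only on these parameters and not on $u$ or $q$.
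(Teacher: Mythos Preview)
The paper does not supply its own proof of this lemma; it simply cites \cite[Lemma~1.1]{herz68}, which is precisely the reference you say you would follow. Your sketch of the two cases --- the trivial direction $m>l$ via \eqref{eq:reduce_fd}, and the Marchaud-type dyadic telescoping for $m<l$ --- is the standard argument and matches what one finds in Herz, so there is no substantive difference to discuss.

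One small correction for when you write it out. In the case $m<l$, iterating the identity
\[
2^{m}\Delta_h^{(m)}u \;=\; \Delta_{2h}^{(m)}u \;+\; \big(\text{error with }\|\cdot\|_q \le C\|\Delta_h^{(m+1)}u\|_q\big)
\]
over scales $h,2h,4h,\dots$ and letting the boundary term $2^{-Km}\|\Delta_{2^Kh}^{(m)}u\|_q\le 2^{-Km}\cdot 2^m\|u\|_q$ vanish as $K\to\infty$ produces the series $\sum_{k\ge 0}2^{-k(m-s)}$, not $\sum_{k\ge 0}2^{-k(l-s)}$. Its convergence therefore rests on the hypothesis $m>s$, which is part of the statement, rather than on $l>s$ (the latter is automatic here since $l>m>s$). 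This does not affect the validity of your plan, but the correct exponent matters if you want to track the constant $C(s,l,m)$ cleanly, and it explains why the lemma genuinely needs \emph{both} $l>s$ and $m>s$.
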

The spaces $N^s_q(\real^d)$ admit the norm
\begin{align}
\norm{u}_{N^s_q} = \norm{u}_{L^q} + [u]_{s, l}
\end{align}
for any choice of integer $l > s$, and these produce equivalent norms by the preceeding lemma. It's often convenient to take $l = [s]+1$.

For an open set $\Omega \subset \real^d$, we define the space $N^s_q(\Omega)$ as all functions $u \in L^q(\Omega)$ such that the following quantity is finite
\begin{align}
\label{eq:besov_rest}
\norm{u}_{N^s_q(\Omega)} = \inf \norm{\tilde u}_{N^s_q(\real^d)}
\end{align}
where the infimum is over functions $\tilde u \in N^s_q(\real^d)$ such that $u = \tilde u$ a.e. on $\Omega$. Other characterizations of such spaces on domains exist, but this will suffice for our purposes.

The following lemma is \cite[Lemma 3.1]{hs23}. This result gives a condition for a function to lie in $N^s_q$-spaces which will be convenient for our purposes.
\begin{lem}
\label{lem:du}
Let $d \ge 2$, $1 \le q \le d$ and let $X = \{a_j\}_{j=1}^M \subset \real^d$ be some collection of $M$ points. For a given $\al > -d/q$, we set $s = \al + d/q$. Suppose for some $A>0$ the function $u \in C^{\infty}(\real^d\backslash X)$ obeys
\begin{align*}
|\nabla^k u(x)| \le A \Big(1 + \sum_{j=1}^M |x-a_j|^{\min\{\al-k,\,0\}}\Big), \quad k =0,1,\dots, [s]+1
\end{align*}
for all $x \in \real^d\backslash X$. Then for any $z \in \real^d$ and $R>0$ we have $u \in N^s_q(B(z,R))$ and $\norm{u}_{N^s_q(B(z,R))} \le CA$ for some $C$ possibly depending on $M$ and $R$, but not $z$ or $X$ (for fixed $M$). 
\end{lem}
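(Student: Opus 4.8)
The plan is to produce a single controlled extension of $u$ to $\real^d$ and estimate its finite differences directly. Fix $\chi\in C^\infty_0(B(z,2R))$ with $\chi\equiv1$ on $B(z,R)$ and $|\nabla^k\chi|\le C_kR^{-k}$, and set $\tilde u=\chi u$. By the Leibniz rule and the hypothesis, $\tilde u\in C^\infty(\real^d\setminus X)$ satisfies the same pointwise bounds with $A$ replaced by $C(R)A$, vanishes outside $B(z,2R)$, and obeys $\|\tilde u\|_{L^q(\real^d)}\le C(R,M)A$ — the last estimate because $\min\{\al,0\}>-d/q$, so each $|\,\cdot-a_j|^{\min\{\al,0\}}$ is locally $q$-integrable with a bound independent of the position of $a_j$. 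Since $\tilde u=u$ on $B(z,R)$, it suffices to bound $\|\tilde u\|_{N^s_q(\real^d)}$; with $l=[s]+1$ this reduces to proving $\|\Delta^{(l)}_h\tilde u\|_{L^q(\real^d)}\le CA|h|^s$ for all $h\neq0$, with $C$ allowed to depend on $d,q,\al,M,R$. For $|h|$ bounded below this is immediate from $\|\Delta^{(l)}_h\tilde u\|_{L^q}\le2^l\|\tilde u\|_{L^q}$, so I take $|h|=\delta$ small and split $\real^d=U_\delta\cup V_\delta$, where $U_\delta=\bigcup_{j=1}^MB(a_j,\Lambda\delta)$ for a fixed large constant $\Lambda=\Lambda(l)$ and $V_\delta$ is the complement.

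\textbf{Far region.} For $x\in V_\delta$ the segment $[x,x+lh]$ avoids $X$ and stays at distance $\ge\tfrac12\dist(x,X)$ from it, so the integral formula for the finite difference gives
\[
|\Delta^{(l)}_h\tilde u(x)|\le\delta^l\sup_{[x,\,x+lh]}|\nabla^l\tilde u|\le C(R)A\,\delta^l\,\dist(x,X)^{\al-l},
\]
using $\al-l<0$ and the compact support to absorb the constant term, and that $X$ has at most $M$ points. Integrating $\dist(\,\cdot\,,X)^{(\al-l)q}\le\sum_j|\,\cdot-a_j|^{(\al-l)q}$ over $B(z,3R)\setminus U_\delta$ (which, for $\delta$ small, contains the support of $\Delta^{(l)}_h\tilde u$ outside $U_\delta$) in polar coordinates about each $a_j$, and using that $(\al-l)q+d<0$ precisely because $l=[s]+1>s=\al+d/q$, one sees this integral is dominated by its lower endpoint and is $\le C(M,R)\delta^{(\al-l)q+d}$. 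Hence $\|\Delta^{(l)}_h\tilde u\|_{L^q(V_\delta)}^q\le CA^q\delta^{lq+(\al-l)q+d}=CA^q\delta^{sq}$.

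\textbf{Near region.} The set $U_\delta$ has measure $\le C(M)\delta^d$, but the triangle inequality alone is too lossy here — it would cost a factor $\delta^{-\al}$ when $\al>0$ — so I use that $\Delta^{(l)}_h$ annihilates polynomials of degree $<l$. Put $l'=\lceil s\rceil-1$, so $l'<s$ and, crucially using the hypothesis $q\le d$ (hence $d/q\ge1$, so $l'\ge\al$), the bound on $\nabla^{l'}\tilde u$ yields $\|\nabla^{l'}\tilde u\|_{L^q(B(a_j,2\Lambda\delta))}\le C(R)A\,\delta^{\al-l'+d/q}$. A standard scaling-and-compactness (Bramble–Hilbert) estimate then supplies a polynomial $P_j$ of degree $\le l'-1<l$ with
\[
\|\tilde u-P_j\|_{L^q(B(a_j,2\Lambda\delta))}\le C\delta^{l'}\,\|\nabla^{l'}\tilde u\|_{L^q(B(a_j,2\Lambda\delta))}\le C(R)A\,\delta^{\al+d/q}=C(R)A\,\delta^s.
\]
Since $\Delta^{(l)}_hP_j=0$ and, for $x\in B(a_j,\Lambda\delta)$, the points $x,x+h,\dots,x+lh$ all lie in $B(a_j,2\Lambda\delta)$, we get $\|\Delta^{(l)}_h\tilde u\|_{L^q(B(a_j,\Lambda\delta))}=\|\Delta^{(l)}_h(\tilde u-P_j)\|_{L^q(B(a_j,\Lambda\delta))}\le2^l\|\tilde u-P_j\|_{L^q(B(a_j,2\Lambda\delta))}\le C(R)A\delta^s$. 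Summing over $j=1,\dots,M$ and combining with the far-region bound gives $\|\Delta^{(l)}_h\tilde u\|_{L^q(\real^d)}\le C(M,R)A\delta^s$, as needed.

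\textbf{Main obstacle.} The crux is the near-region estimate: it genuinely requires polynomial cancellation once $\al>0$, and carrying it out via Bramble–Hilbert on balls of radius $\sim|h|$ has the virtue of being insensitive to whether several of the points $a_j$ cluster at scale $|h|$ — a configuration that would obstruct a Taylor argument anchored at individual singular points and that must be allowed, since $C$ is required to be uniform in $X$. It is also worth keeping straight the complementary roles of the two orders: $l=[s]+1>s$ is exactly what makes the far-region integral converge at its lower endpoint, whereas $l'=\lceil s\rceil-1<s$ (with $q\le d$) is exactly what puts $\nabla^{l'}\tilde u$ in $L^q$ with the scaling $\delta^{\al-l'+d/q}$, while both polynomial degrees $l'-1$ stay below $l$.
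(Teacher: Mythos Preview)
The paper does not give its own proof of this lemma; it is quoted as \cite[Lemma~3.1]{hs23}, so there is nothing in the paper itself to compare your argument against.

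Your proof is correct. The near/far splitting at scale $|h|$, with a Bramble--Hilbert step on each ball $B(a_j,2\Lambda|h|)$, is a clean way to handle the singularities uniformly in the configuration $X$, and your identification of the two orders --- $l=[s]+1>s$ to make the far integral converge at its inner endpoint, and $l'=\lceil s\rceil-1\ge\al$ (this is exactly where $q\le d$ is used) to put $\nabla^{l'}\tilde u$ in $L^q$ locally --- is the crux. Two cosmetic points. First, in the far region you absorb the additive ``$1$'' from the hypothesis into $\dist(x,X)^{\al-l}$; this fails when all the $a_j$ lie far from $B(z,2R)$. Simply keep the constant term and note that $\delta^{lq}\,|B(z,3R)|\le C(R)\,\delta^{sq}$ since $l>s$ and $\delta\le 1$. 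Second, it is worth saying explicitly that, because $d\ge 2$, the pointwise derivative bounds on $\real^d\setminus X$ give weak derivatives in $L^q_{loc}(\real^d)$ across the finite point set $X$ (the classical derivatives are locally $L^q$ and points are removable), since Bramble--Hilbert requires genuine Sobolev regularity on the full ball $B(a_j,2\Lambda\delta)$.
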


The following lemma gives a criteria for inclusion into $N^s_q$-spaces for functions whose restrictions to lower dimensional spaces have appropriate $N^s_q$-smoothness. It is elementary, but we have not managed to find it in the literature. It will play a vital part in the proof of our main result, Theorem \ref{thm:main}, since for particles in $\real^3$, Lemma \ref{lem:du} alone would only be able to prove $N^s_q$ inclusion on subsets of $\real^3$ and not $\real^{3K}$ in general, which we will require.
\begin{lem}
\label{lem:mb_besov}
Let $d_1, d_2 \ge 1$ and $d=d_1+d_2$. Suppose $u(\,\cdot\,,x_2) \in N_q^s(\real^{d_1})$ and $u(x_1, \,\cdot\,) \in N_q^s(\real^{d_2})$ for a.e. $x_1 \in \real^{d_1}$, $x_2 \in \real^{d_2}$, and that there is some $A>0$ such that
\begin{align}
\label{eq:ns_partial}
\Big(\int_{\real^{d_1}}\norm{u(x_1, \,\cdot\,)}^q_{N^s_q(\real^{d_2})}\,dx_1\Big)^{1/q} +\Big(\int_{\real^{d_2}}\norm{u(\,\cdot\,, x_2)}^q_{N^s_q(\real^{d_1})}\,dx_2\Big)^{1/q} \le A,
\end{align}
with the appropriate modification for $q=\infty$. Then $u \in N^s_q(\real^d)$ and $\norm{u}_{N^s_q(\real^d)} \le CA$. Here, $C$ is independent of  $A$.
\end{lem}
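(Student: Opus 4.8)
The plan is to estimate the Besov--Nikol'skii seminorm $[u]_{s,l}$ of $u$ on $\real^d$ directly from finite differences, exploiting the product structure $\real^d=\real^{d_1}\times\real^{d_2}$. Throughout write $x=(x_1,x_2)$ and $h=(h_1,h_2)$ with $x_i,h_i\in\real^{d_i}$, and note $|h_1|,|h_2|\le|h|$. First observe that $u\in L^q(\real^d)$ with $\norm{u}_{L^q(\real^d)}\le A$: indeed, by Fubini and $\norm{\,\cdot\,}_{L^q}\le\norm{\,\cdot\,}_{N^s_q}$,
\begin{align*}
\norm{u}^q_{L^q(\real^d)}=\int_{\real^{d_1}}\norm{u(x_1,\,\cdot\,)}^q_{L^q(\real^{d_2})}\,dx_1\le\int_{\real^{d_1}}\norm{u(x_1,\,\cdot\,)}^q_{N^s_q(\real^{d_2})}\,dx_1\le A^q,
\end{align*}
with the obvious replacement of integrals by suprema when $q=\infty$. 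It therefore remains to prove $\norm{\Delta^{(l)}_h u}_{L^q(\real^d)}\le C|h|^s A$ uniformly in $h\ne0$, for a conveniently chosen integer $l>s$; by Lemma \ref{lem:ns} the precise value of $l$ is immaterial up to constants.

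The key step is an operator identity. Let $T^{1}_{h_1}$ and $T^{2}_{h_2}$ denote translation by $h_1$, resp. $h_2$, acting only on the $x_1$-, resp. $x_2$-block, and let $\Delta^{(k)}_{h_1}$, $\Delta^{(k)}_{h_2}$ be the corresponding order-$k$ finite differences. Since translation by $h$ on $\real^d$ equals $T^{1}_{h_1}T^{2}_{h_2}$, setting $P=T^{1}_{h_1}-I$ and $Q=T^{2}_{h_2}-I$ gives $\Delta^{(1)}_h=(I+P)(I+Q)-I=PT^{2}_{h_2}+Q$. The operators $P$, $Q$, $T^2_{h_2}$ pairwise commute (the first acts in the $x_1$-block only, the other two in the $x_2$-block only), so the binomial theorem yields
\begin{align*}
\Delta^{(l)}_h u=\big(PT^{2}_{h_2}+Q\big)^l u=\sum_{k=0}^{l}\binom{l}{k}\,\Delta^{(k)}_{h_1}\,T^{2}_{kh_2}\,\Delta^{(l-k)}_{h_2}u.
\end{align*}
As $T^{2}_{kh_2}$ is an isometry of $L^q(\real^d)$ and $\norm{\Delta^{(j)}_g v}_q\le 2^{j}\norm v_q$ (cf. (\ref{eq:reduce_fd}) with $l=0$), each summand obeys, using also that $\Delta^{(k)}_{h_1}$ and $\Delta^{(l-k)}_{h_2}$ commute,
\begin{align*}
\norm{\Delta^{(k)}_{h_1}T^{2}_{kh_2}\Delta^{(l-k)}_{h_2}u}_{L^q(\real^d)}\le\min\Big\{2^{l-k}\norm{\Delta^{(k)}_{h_1}u}_{L^q(\real^d)},\ 2^{k}\norm{\Delta^{(l-k)}_{h_2}u}_{L^q(\real^d)}\Big\}.
\end{align*}

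Now fix an integer $l>2s$. For every $k\in\{0,\dots,l\}$ at least one of the inequalities $k>s$, $l-k>s$ holds. If $l-k>s$, then by Fubini, the bound $\norm{\Delta^{(l-k)}_{h_2}v}_{L^q(\real^{d_2})}\le[v]_{s,\,l-k}\,|h_2|^s$, and the seminorm equivalence of Lemma \ref{lem:ns},
\begin{align*}
\norm{\Delta^{(l-k)}_{h_2}u}^q_{L^q(\real^d)}
&=\int_{\real^{d_1}}\norm{\Delta^{(l-k)}_{h_2}u(x_1,\,\cdot\,)}^q_{L^q(\real^{d_2})}\,dx_1\\
&\le C|h_2|^{sq}\int_{\real^{d_1}}\norm{u(x_1,\,\cdot\,)}^q_{N^s_q(\real^{d_2})}\,dx_1\le C|h_2|^{sq}A^q,
\end{align*}
so this term is $\le C|h_2|^s A\le C|h|^s A$ (using $|h_2|\le|h|$ and $s>0$); symmetrically, if $k>s$, the term is $\le C|h_1|^s A\le C|h|^s A$, now invoking the bound on $\int_{\real^{d_2}}\norm{u(\,\cdot\,,x_2)}^q_{N^s_q(\real^{d_1})}\,dx_2$ from (\ref{eq:ns_partial}). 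Summing the $l+1$ terms gives $\norm{\Delta^{(l)}_h u}_{L^q(\real^d)}\le C|h|^s A$, hence $[u]_{s,l}\le CA$ on $\real^d$; together with $\norm{u}_{L^q(\real^d)}\le A$ this yields $\norm{u}_{N^s_q(\real^d)}\le CA$, with $C$ depending only on $s,q,d$.

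The one genuinely non-routine ingredient is the operator identity together with the observation that choosing $l>2s$ forces every term of the expansion to carry a finite difference of order $>s$ in at least one of the two variable blocks — precisely what allows the hypothesis (\ref{eq:ns_partial}) to be applied term by term. Everything else is Fubini's theorem, the elementary bound $\norm{\Delta^{(1)}_g}_{q\to q}\le 2$, and Lemma \ref{lem:ns} to harmonise the orders of the differences with the order used to define the $N^s_q$-norms; I anticipate no difficulty beyond keeping track of constants.
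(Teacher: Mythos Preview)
Your proof is correct and follows essentially the same approach as the paper: both derive the binomial expansion of $\Delta^{(l)}_h$ into mixed partial differences $\Delta^{(k)}_{h_1}\Delta^{(l-k)}_{h_2}$ (the paper writes $\Delta^{(1)}_{\bh}=\Delta^{(1)}_{\bh_1}+T^1_{h_1}\Delta^{(1)}_{\bh_2}$ and iterates, you write $\Delta^{(1)}_h=PT^2_{h_2}+Q$ and apply the binomial theorem --- these yield equivalent identities), choose the order large enough ($l=2([s]+1)$ in the paper, any integer $l>2s$ for you) so that each term carries a difference of order $>s$ in at least one block, and then invoke the hypothesis via Fubini. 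The only cosmetic difference is that the paper first reduces every surviving difference to the fixed order $r=[s]+1$ using \eqref{eq:reduce_fd} before applying the hypothesis, whereas you apply the hypothesis directly at order $l-k$ (or $k$) and then call on Lemma~\ref{lem:ns} to equate seminorms of different orders; both routes are equally valid.
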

\begin{rem}
It is interesting to consider the opposite problem, that is, given a general $u \in N^s_q(\real^d)$, is it true that $u(\,\cdot\,,x_2) \in N^s_q(\real^{d_1})$ for a.e. $x_2 \in \real^{d_2}$? This is, in fact, false in general, see \cite{brass18} and \cite{mrs20}.
\end{rem}

\begin{proof}
By the definition of the $N^s_q$-norm, we can use either term on the left-hand side of (\ref{eq:ns_partial}) to immediately show that $\norm{u}_{L^q(\real^d)} \le A$. It remains to consider the quantity $[u]_{s,l}$ for an appropriate $l$.

Take any $\bh = (h_1,h_2) \in \real^{d_1} \times \real^{d_2}$ and set $\bh_1 = (h_1,0)$ and $\bh_2 = (0,h_2)$. Then a simple calculation shows
\begin{align*}
\Delta_{\bh}^{(1)}u(x_1,x_2) = \Delta_{\bh_1}^{(1)}u(x_1,x_2) + \Delta_{\bh_2}^{(1)}u(x_1+h_1, x_2),
\end{align*}
and hence by iteration,
\begin{align*}
\Delta_{\bh}^{(l)}u(x_1,x_2) = \sum_{k=0}^l {l \choose k} \Delta_{\bh_1}^{(l-k)} \Delta_{\bh_2}^{(k)}u(x_1+kh_1, x_2).
\end{align*}

Let $r = [s]+1$. Using the above formula for $l=2r$ we get
\begin{align*}
\norm{\Delta_{\bh}^{(2r)}u}_{L^q(\real^{d})} \le C\sum_{k=0}^{2r} \norm{\Delta_{\bh_1}^{(2r-k)} \Delta_{\bh_2}^{(k)}u}_{L^q(\real^d)}.
\end{align*}
We choose finite differences of order $2r$ so that in each term of the above sum, at least one of $\Delta_{\bh_1}$ and $\Delta_{\bh_2}$ is of order at least $r$.  Without loss, we therefore consider terms where $k \ge r$. Terms with $k<r$ are treated in a similar way, but the roles of $\bh_1$ and $\bh_2$ are interchanged. Using (\ref{eq:reduce_fd}), we remove the finite difference involving $\bh_1$ and reduce the order of $\Delta_{\bh_2}$ by $k-r$,
\begin{align*}
\norm{\Delta_{\bh_1}^{(2r-k)} \Delta_{\bh_2}^{(k)}u}_{L^q(\real^d)} \le 2^{r} \norm{\Delta_{\bh_2}^{(r)}u}_{L^q(\real^d)}.
\end{align*}
Suppose $q<\infty$. Then, for $u_{x_1} = u(x_1,\,\cdot\,)$ defined on $\real^{d_2}$ we have
\begin{align*}
|\bh|^{-qs}\norm{\Delta_{\bh_2}^{(r)}u}^q_{L^q(\real^d)} &\le |h_2|^{-qs} \int \norm{\Delta_{h_2}^{(r)}u_{x_1}}^q_{L^q(\real^{d_2})}\, dx_1\\
&\le  \int [u_{x_1}]^q_{s,r}\, dx_1 \le A^q.
\end{align*}
by (\ref{eq:ns_partial}). Similarly, for $q=\infty$ we get
\begin{align*}
|\bh|^{-s}\norm{\Delta_{\bh_2}^{(r)}u}_{L^{\infty}(\real^d)} \le A.
\end{align*}
Therefore, for some $C$,
\begin{align*}
\sup_{\bh \in \real^d,\, \bh \ne 0}|\bh|^{-s}\norm{\Delta_{\bh}^{(2r)}u}_{L^q(\real^{d})} \le CA,
\end{align*}
which completes the proof in view of Lemma \ref{lem:ns}.
\end{proof}

We now apply this result to the type of functions we're interested in. Suppose that $u \in C^{\infty}(\real^{3N}\backslash\Sigma)$. Given $\al > -3/2$, let $s = \al + 3/2$. In the following, $|\,\cdot\,|$ is understood in $\real^3$. Let $A>0$ be such that for each $j=1,\dots,N$ and each $m \in \naturals_0^3$, $|m| \le [s]+1$,
\begin{align}
\label{eq:mb_pointwise}
|\partial^{m}_{x_j}u(\bx)| \le A \Big(1+|x_j|^{\min\{\al-|m|,\,0\}} + \sum_{\substack{1 \le k \le N \\ k \ne j}}|x_j-x_k|^{\min\{\al-|m|,\,0\}}\Big)
\end{align}
for all $\bx \in \real^{3N}\backslash\Sigma$.

\begin{lem}
\label{lem:pointwise}
Let $\al >-3/2$ and $s = \al + 3/2$. Suppose $u \in C^{\infty}(\real^{3N}\backslash\Sigma)$ obeys (\ref{eq:mb_pointwise}) for some $A>0$. Let $1 \le K \le N$, and $X_1, \dots, X_K \subset \real^3$ be open and bounded sets. Then there exists $C$ such that
\begin{align*}
\norm{u(\,\cdot\,, \hbx)}_{N^s_2(X_1 \times \dots \times X_K)} \le CA, \quad \text{for all}\quad \hbx \in \real^{3(N-K)}\backslash\Sigma^{(N-K)}
\end{align*}
The constant $C$ is dependent on the sets $X_l$ only via $\diam(X_l)$, $1 \le l \le K$.
\end{lem}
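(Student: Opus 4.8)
The plan is to prove the lemma by induction on $K$, for $N$ fixed, combining the one–variable estimate of Lemma~\ref{lem:du} (which sees the singularities of $u$ in a single $\real^3$ variable as a \emph{finite} point set) with the gluing estimate of Lemma~\ref{lem:mb_besov} (which passes from one–variable smoothness to joint smoothness). Since the hypothesis (\ref{eq:mb_pointwise}) is symmetric under permuting the $N$ particle variables, we may run the induction with the $K$ ``active'' variables being an arbitrary $K$-element subset of $\{1,\dots,N\}$; this is what lets us peel off one variable at a time. Once and for all, for each index $l$ and each relevant $X_l$ fix a ball $B_l\supset X_l$ of radius comparable to $\diam(X_l)$ and a cutoff $\chi_l\in C^\infty_0(\real^3)$ with $\chi_l\equiv 1$ on $X_l$ and $\supp\chi_l\subset 2B_l$, taken as a translate and dilate of a single fixed profile, so that all of its norms and the measure $|\supp\chi_l|$ depend on $X_l$ only through $\diam(X_l)$. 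Writing $v=u(\,\cdot\,,\hbx)$ and $\Xi(\check\bx)=\prod_{l=1}^{K}\chi_l(x_l)$, we have $\Xi v=v$ on $X_1\times\cdots\times X_K$, so by the extension definition (\ref{eq:besov_rest}) of the Besov norm on a domain it suffices to prove the reduced bound $\norm{\Xi v}_{N^s_2(\real^{3K})}\le CA$, which is the statement the induction carries.

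\emph{Base case $K=1$.} With $\hbx$ frozen ($\hbx\notin\Sigma^{(N-1)}$, so that $v$ is well defined and $C^\infty$ off the singular set of the remaining variable), estimate (\ref{eq:mb_pointwise}) says precisely that $v$, as a function of its single $\real^3$ variable, is smooth off the finite set consisting of the origin and the frozen positions — at most $N$ points — and obeys the pointwise hypothesis of Lemma~\ref{lem:du} with $d=3$, $q=2$, $\al$ and $s=\al+3/2$. Hence $\norm{v}_{N^s_2(B)}\le CA$ for any ball $B\supset\supp\chi_1$, with $C$ depending only on $N$ and the radius of $B$, and in particular independent of the centre of $B$ and of $\hbx$. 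Multiplying by the fixed cutoff $\chi_1$ and using the standard fact that pointwise multiplication by a $C^\infty_0$ function is bounded on $N^s_2$ (see e.g.\ \cite{adams}, \cite{bergh}), together with the zero–extension characterization (\ref{eq:besov_rest}), gives $\norm{\Xi v}_{N^s_2(\real^3)}=\norm{\chi_1 v}_{N^s_2(\real^3)}\le CA$.

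\emph{Inductive step.} Assume the reduced bound for $K-1$ active variables. After relabelling, take the active set to be $\{1,\dots,K\}$ and split $\real^{3K}=\real^{3}_{x_1}\times\real^{3(K-1)}_{(x_2,\dots,x_K)}$; we check the two hypotheses of Lemma~\ref{lem:mb_besov}. \emph{First slice:} for a.e.\ fixed $(x_2,\dots,x_K)$, the map $x_1\mapsto(\Xi v)$ equals $\chi_1(x_1)\prod_{l\ge2}\chi_l(x_l)$ times a function of $x_1$ satisfying the hypothesis of Lemma~\ref{lem:du}; exactly as in the base case this gives $\norm{(\Xi v)(\,\cdot\,,(x_2,\dots,x_K))}_{N^s_2(\real^3)}\le CA\prod_{l\ge2}\1_{\supp\chi_l}(x_l)$, whose square integrates in $(x_2,\dots,x_K)$ to at most $(C'A)^2$ with $C'$ depending only on the $\diam(X_l)$. \emph{Second slice:} for a.e.\ fixed $x_1$ — namely $x_1\ne0$ and $x_1\ne x_l$ for $l>K$, which together with $\hbx\notin\Sigma^{(N-K)}$ puts the frozen coordinates outside the singular set relevant to the remaining active variables — the function $(x_2,\dots,x_K)\mapsto u(x_1,x_2,\dots,x_K,\hbx)$ obeys (\ref{eq:mb_pointwise}) with the \emph{same} constant $A$ in its $K-1$ active variables, so the inductive hypothesis yields $\norm{(\chi_2\otimes\cdots\otimes\chi_K)\,v(x_1,\,\cdot\,)}_{N^s_2(\real^{3(K-1)})}\le CA$ uniformly in $x_1$; multiplying by $|\chi_1(x_1)|\le\1_{\supp\chi_1}(x_1)$ and integrating in $x_1$ again gives at most $(C'A)^2$. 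Lemma~\ref{lem:mb_besov} now produces $\norm{\Xi v}_{N^s_2(\real^{3K})}\le C''A$, completing the induction, and hence $\norm{u(\,\cdot\,,\hbx)}_{N^s_2(X_1\times\cdots\times X_K)}\le\norm{\Xi v}_{N^s_2(\real^{3K})}\le C''A$.

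The required uniformity is automatic: the constant in Lemma~\ref{lem:du} is independent of the ball centre and of $\hbx$, the cutoffs and the measures $|\supp\chi_l|$ depend on each $X_l$ only through $\diam(X_l)$, and the relabellings merely permute a permutation–symmetric hypothesis. The step that needs the most care is the reduction in the inductive step: Lemma~\ref{lem:mb_besov} lives on $\real^{d_1}\times\real^{d_2}$ whereas Lemma~\ref{lem:du} only controls $N^s_2$-norms on balls, so the cutoffs $\chi_l$ and the accompanying $C^\infty_0$-multiplier / zero–extension bookkeeping are genuinely needed, and one must verify — as above, using that the cutoff profiles are fixed and that the Lemma~\ref{lem:du} constants do not see $\hbx$ — that inserting them costs only a harmless $\diam(X_l)$-dependent factor and preserves the $\hbx$-uniformity of the constants.
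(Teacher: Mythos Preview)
Your proof is correct and follows essentially the same route as the paper: introduce the tensorized cutoff $\Xi=\chi_1\otimes\cdots\otimes\chi_K$, use Lemma~\ref{lem:du} to control each one–variable slice in $N^s_2(\real^3)$, and then assemble via Lemma~\ref{lem:mb_besov}. The only difference is organizational: the paper applies the cutoffs first, verifies the pointwise bound (\ref{eq:mb_pointwise2}) for the localized function, and then says ``apply Lemma~\ref{lem:mb_besov} successively $K-1$ times'', whereas you unwrap that iteration as an explicit induction on $K$ (peeling off one variable and invoking the inductive hypothesis for the second slice). Both need the same zero-extension / $C^\infty_0$-multiplier bookkeeping to pass from $N^s_2$ on a ball to $N^s_2(\real^3)$, and the uniformity in $\hbx$ and in translates of the $X_l$ is handled identically.
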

For $K=N$, the function $u(\,\cdot\,, \hbx)$ should be understood simply as $u$. We use the standard definition of $\diam(X) = \sup\{|x-y| : x,y \in X\}$, for a set $X$.

\begin{proof}
We fix some $\hbx \in \real^{3(N-K)}\backslash\Sigma^{(N-K)}$ throughout, whose components we label as $\hbx = (x_{K+1}, \dots, x_N)$. Using this, we define the function $u_{\hbx}$ on $\real^{3K}$ by $u_{\hbx}(\check\bx) = u(\check\bx, \hbx)$ for all $\check\bx = (x_1, \dots, x_K) \in \real^{3K}$.

We begin by localizing the function $u_{\hbx}$ on $X_1 \times \dots \times X_K$. Firstly, let $\chi \in C_c^{\infty}(\real)$ obey $0 \le \chi \le 1$ and be such that $\chi(x) = 1$ for $|x| \le 1$ and $\chi(x) = 0$ for $|x| \ge 2$. For each $j=1,\dots,K$ we define $\chi_j \in C_c^{\infty}(\real^3)$ as follows. Take some $z_j \in X_j$ and set $d_j =\textnormal{diam}(X_j)$, whereby $X_j \subset B(z_j, d_j)$. We then define $\chi_j(x) = \chi(|x - z_j|/d_j)$ for $x \in \real^3$. This function obeys $\chi_j = 1$ on $B(z_j, d_j)$ and $\chi_j = 0$ on $\real^3\backslash B(z_j, 2d_j)$.

Now, let
\begin{align}
\label{eq:v_def}
v(\check\bx) = \chi_1(x_1) \dots \chi_K(x_K) u_{\hbx}(\check\bx)
\end{align} 
for $\check\bx \in \real^{3K}$. Then, by (\ref{eq:mb_pointwise}) we have some constant $C$, independent of $A$ and the choice of $\hbx$, such that for all $j=1,\dots,K$ and $|m| \le [s]+1$,
\begin{align}
\label{eq:mb_pointwise2}
|\partial^{m}_{x_j} v(\check\bx)| \le CA \Big(1+|x_j|^{\min\{\al-|m|,\,0\}} + \sum_{\substack{1 \le k \le N \\ k \ne j}}|x_j-x_k|^{\min\{\al-|m|,\,0\}}\Big)
\end{align}
for $\check\bx$ such that $(\check\bx, \hbx) \in \real^{3N}\backslash\Sigma$.

We now take some $1 \le j \le K$ and fix some $\check\bx_j \in \real^{3(K-1)}$ such that $(\check\bx_j, \hbx) \in \real^{3(N-1)}\backslash\Sigma^{(N-1)}$. Then we can define $v_{\check\bx_j}$ by $v_{\check\bx_j}(x) = v(x, \check\bx_j)$ for $x \in \real^3$, where we used the notation (\ref{not:x3}). The inequality (\ref{eq:mb_pointwise2}) becomes, with the same $C$,
\begin{align}
|\partial_x^{m} v_{\check\bx_j}(x)| \le CA \Big(1+|x|^{\min\{\al-|m|,\,0\}} + \sum_{\substack{1 \le k \le N \\ k \ne j}}|x-x_k|^{\min\{\al-|m|,\,0\}}\Big)
\end{align}
for all $x \in \real^3\backslash\{0,x_1, \dots, x_{j-1}, x_{j+1}, \dots, x_N \}$, with obvious modification if $j=1$. Hence, by Lemma \ref{lem:du} with $q=2$ and, for example, $z = z_j$ and $R= 2d_j$, we obtain $v(\,\cdot\,, \check\bx_j) \in N^s_2(\real^3)$ and $\norm{v(\,\cdot\,, \check\bx_j)}_{N^s_2(\real^3)} \le CA$ for some new $C$. Here we used that $v(\,\cdot\,, \check\bx_j)$ has support in $B(z_j, 2d_j)$.

Since this holds for each $j=1,\dots,K$ we can apply Lemma \ref{lem:mb_besov}, successively $K-1$ times, to get $v \in N^s_2(\real^{3K})$ and $\norm{v}_{N^s_2(\real^{3K})} \le CA$, for some new constant $C$. The required result then follows
\begin{align*}
\norm{u(\,\cdot\,,\hbx)}_{N^s_2(X_1 \times \dots \times X_K)} \le \norm{v}_{N^s_2(\real^{3K})},
\end{align*}
which itself comes from (\ref{eq:v_def}) and (\ref{eq:besov_rest}). Finally, we note that the constants have been independent of the choice of $\hbx$.

\end{proof}

\section{Multipliers}
\label{chpt:mult}
Let $X \subset \real^d$, $Y \subset \real^m$. In this section we will be concerned with the spaces $S_q = S_q(L^2(X), L^2(Y))$ and $S_{q,\infty} = S_{q,\infty}(L^2(X), L^2(Y))$ for $0< q \le 2$. Since $S_q \subset S_2$, each operator in the former has an integral operator representation in $L^2(Y \times X)$. This is clearly also true for $S_{q,\infty}$ when $q<2$. We will use the same symbol $T$ for both the operator and its kernel, and we will use the notation $\intt(bTa)$ and $\intt(T)$ introduced in (\ref{eq:io_def}).

Let $\La : Y \times X \to \complex$ be a function. We call $\La$ a ``multiplier'' on $S_q$ if $\intt(\La T)$ $\big(= \intt(\La(y,x) T(y,x))\big) \in S_q$ for each $T \in S_q$ and, in addition,
\begin{align*}
M_{q}(\La) := \sup_{0 \ne T \in S_q} \frac{\norm{\intt(\La T)}_q}{\norm{T}_q} < \infty.
\end{align*}
We define multipliers on $S_{q,\infty}$ in the same way, but replacing $S_q$ by $S_{q,\infty}$ and using the functional $\norm{\,\cdot\,}_{q,\infty}$. Due to the characterization of $S_2$ as $L^2(Y \times X)$, it can be shown that the set of multipliers on $S_2$ is exactly $L^{\infty}(Y \times X)$.

The following proposition is from \cite[Corollary 8.2]{bs77}. By the preceeding remark, one consequence of this result is that any multiplier on $S_q$ for $q<2$ will lie in $L^{\infty}(Y \times X)$.
\begin{prop}
\label{prop:bs1}
If $\La$ is a multiplier on $S_q$ for $q < 2$, then it is also a multiplier on $S_p$ and $S_{p, \infty}$ for all $p \in(q,2]$. Furthermore,
$M_p(\La) + M_{p, \infty}(\La) \le CM_{q}(\La)$.
\end{prop}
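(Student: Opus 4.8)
The plan is to read the statement as an interpolation result between two endpoints: $S_q$, on which $\La$ is assumed to be a multiplier, and $S_2$, on which (as recalled above) the multipliers are exactly $L^\infty(Y\times X)$. The one genuinely non‑formal step is an a priori bound showing $\La\in L^\infty$, which is what makes the second endpoint available; once that is in hand the proposition follows from the complex and real interpolation theorems applied to the Schatten and Lorentz--Schatten scales.

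\textbf{Step 1: the $S_2$ endpoint.} First I would prove $\La\in L^\infty(Y\times X)$ with $\norm{\La}_{L^\infty}\le M_q(\La)$, by testing the multiplier property on rank‑one operators. For $f\in L^2(Y)$, $g\in L^2(X)$ let $T$ be the operator with kernel $f(y)\overline{g(x)}$, so that $\norm{T}_q=\norm{f}_{L^2}\norm{g}_{L^2}$. Then $\intt(\La T)\in S_q$, and since $q<2$ the inclusion (\ref{eq:sp_incl}) gives $\norm{\intt(\La T)}_2\le\norm{\intt(\La T)}_q$, whence
\begin{align*}
\Big(\int_Y\!\int_X|\La(y,x)|^2|f(y)|^2|g(x)|^2\,dx\,dy\Big)^{1/2}
&=\norm{\intt(\La T)}_2\le\norm{\intt(\La T)}_q\\
&\le M_q(\La)\,\norm{f}_{L^2}\norm{g}_{L^2}.
\end{align*}
Choosing $|f|^2/\norm{f}_{L^2}^2$ and $|g|^2/\norm{g}_{L^2}^2$ to be $L^2$‑normalised indicator functions of balls shrinking to Lebesgue points $y_0\in Y$, $x_0\in X$ and applying the Lebesgue differentiation theorem on $Y\times X$ gives $|\La(y_0,x_0)|\le M_q(\La)$ for a.e.\ $(y_0,x_0)$; hence $\La\in L^\infty$ and, since $\norm{\La t}_{L^2}\le\norm{\La}_{L^\infty}\norm{t}_{L^2}$ for kernels $t\in L^2$, $M_2(\La)\le\norm{\La}_{L^\infty}\le M_q(\La)$. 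Thus the linear map $\Phi_\La\colon T\mapsto\intt(\La T)$ is well defined on $S_2$ (pointwise multiplication of $L^2(Y\times X)$ kernels by the bounded function $\La$), maps $S_q\to S_q$ with norm $M_q(\La)$ and $S_2\to S_2$ with norm $M_2(\La)\le M_q(\La)$, and the two actions agree on $S_q\subset S_2$.

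\textbf{Step 2: interpolation.} For the strong statement I would invoke the complex interpolation identity for Schatten classes, $[S_q,S_2]_\theta=S_p$ with $1/p=(1-\theta)/q+\theta/2$ (see \cite[Chapter 11]{bs87}); applying the complex interpolation theorem to $\Phi_\La$ together with Step 1 yields, for each $p\in(q,2)$, boundedness of $\Phi_\La$ on $S_p$ with $M_p(\La)\le M_q(\La)^{1-\theta}M_2(\La)^{\theta}\le M_q(\La)$, while $p=2$ is Step 1 itself. For the weak statement I would use the real interpolation identity $(S_q,S_2)_{\theta,\infty}=S_{p,\infty}$ with the same exponent relation (the Lorentz--Schatten scale; see \cite[Chapter 11]{bs87}, \cite[Chapter 1]{bs77}); the real interpolation theorem then gives, for $p\in(q,2)$, boundedness on $S_{p,\infty}$ with $M_{p,\infty}(\La)\le C(p,q)\,M_q(\La)$. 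Adding these two bounds yields $M_p(\La)+M_{p,\infty}(\La)\le C M_q(\La)$. The borderline case $p=2$ of the weak statement is not reached by interpolating inside the couple $(S_q,S_2)$, since $S_{2,\infty}$ is no longer identified with a space of $L^2$ kernels; it is part of the general interpolation scheme of \cite{bs77} and I would simply quote it from there.

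\textbf{Main obstacle.} The only step that is not soft is Step~1: a multiplier on $S_q$ does not obviously act boundedly on $S_2$, so one must first extract the $L^\infty$ bound on $\La$ in order to have a second endpoint at all. Everything after that is formal, modulo the classical (but non‑trivial) complex and real interpolation identities for Schatten and Lorentz--Schatten classes, with the sole borderline case $p=2$ of the weak statement deferred to \cite{bs77}.
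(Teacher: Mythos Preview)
The paper does not prove this proposition at all; it simply records it as \cite[Corollary 8.2]{bs77}. Your sketch is therefore not competing with a proof in the paper but supplying one, and the strategy you outline (rank-one testing to force $\La\in L^\infty$, then interpolation between the $S_q$ and $S_2$ endpoints) is the natural route and is essentially how Birman--Solomyak argue.

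One technical point to watch: the proposition is stated for all $q<2$, and in the paper it is invoked precisely with $q<1$ (see the proof of Lemma \ref{lem:multiplier}, where one takes $q\in(q_0,1)$). In that range $S_q$ is only quasi-Banach, so the classical Calder\'on complex method you invoke for $[S_q,S_2]_\theta=S_p$ is not directly available; you would need Kalton's extension or, more simply, run the whole argument through real interpolation, which is unproblematic for quasi-Banach couples and already yields both conclusions via $(S_q,S_2)_{\theta,p}=S_p$ and $(S_q,S_2)_{\theta,\infty}=S_{p,\infty}$. With that adjustment your Step 2 goes through, and your Step 1 is exactly the right idea for securing the second endpoint.
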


The following result is from \cite[Theorems 8.1 and 8.2]{bs77}.
\begin{prop}
\label{prop:bs2}
Let $0<q \le 1$. Then $\La \in L^{\infty}(Y \times X)$ is a multiplier on $S_q$ if and only if $\intt(b \La a) \in S_q$ for all $a \in L^2(X)$ and $b \in L^2(Y)$, and
\begin{align*}
\sup_{\norm{a}_{L^2(X)} = \norm{b}_{L^2(Y)} = 1} \norm{\intt(b \La a)}_{q} < \infty
\end{align*}
In which case, this quantity concides with $M_q(\La)$.
\end{prop}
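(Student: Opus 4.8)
The plan is to prove the two implications separately. Write $\mathcal{M} := \sup\{\norm{\intt(b\La a)}_q : \norm{a}_{L^2(X)} = \norm{b}_{L^2(Y)} = 1\}$; the easy implication will produce $\mathcal{M} \le M_q(\La)$ and the substantive one $M_q(\La) \le \mathcal{M}$, so that equality, and in particular finiteness of one side iff finiteness of the other, follows at once.

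For the ``only if'' direction I would test the multiplier on rank-one kernels. Given unit vectors $a \in L^2(X)$ and $b \in L^2(Y)$, the kernel $(y,x) \mapsto b(y)a(x)$ defines a rank-one operator whose single non-zero singular value equals $\norm{b}_{L^2(Y)}\norm{a}_{L^2(X)} = 1$, so $\intt(ba) \in S_q$ with $\norm{\intt(ba)}_q = 1$ for every $q>0$. Since the kernel of $\intt\big(\La\cdot\intt(ba)\big)$ is $\La(y,x)b(y)a(x)$, this operator is precisely $\intt(b\La a)$; applying the multiplier bound $\norm{\intt(\La S)}_q \le M_q(\La)\norm{S}_q$ with $S = \intt(ba)$ and taking the supremum over $a,b$ yields $\mathcal{M} \le M_q(\La)$.

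For the ``if'' direction, assume $\mathcal{M}<\infty$ and take an arbitrary $T \in S_q$; I want $\intt(\La T) \in S_q$ with $\norm{\intt(\La T)}_q \le \mathcal{M}\norm{T}_q$. Since $q\le 1<2$, we have $T \in S_2$, so $T$ admits a Schmidt kernel $T(y,x) = \sum_n s_n(T)\,\phi_n(y)\overline{\psi_n(x)}$ with $\{\phi_n\}\subset L^2(Y)$, $\{\psi_n\}\subset L^2(X)$ orthonormal and the series converging in $L^2(Y\times X)$. I would then write $\intt(\La T)$ formally as $\sum_n s_n(T)\,\intt\big(\phi_n\La\overline{\psi_n}\big)$, observe that each summand lies in $S_q$ with norm $\le\mathcal{M}$ by hypothesis, and sum using the $q$-triangle inequality $\norm{A+B}_q^q \le \norm{A}_q^q + \norm{B}_q^q$, valid for $0<q\le 1$ (see \cite[Chapter 11]{bs87}). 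The partial sums are then Cauchy in $S_q$ because $\sum_n s_n(T)^q = \norm{T}_q^q<\infty$, so they converge in $S_q$; and since $\norm{\,\cdot\,}_2 \le \norm{\,\cdot\,}_q$ for $q\le 2$ while multiplication by $\La\in L^\infty(Y\times X)$ is continuous on $L^2(Y\times X)$, the $S_q$-limit must agree with the $S_2$-limit, namely $\intt(\La T)$. Passing to the limit gives $\norm{\intt(\La T)}_q^q \le \mathcal{M}^q\norm{T}_q^q$, hence $M_q(\La)\le\mathcal{M}$, and combining with the first part, $M_q(\La)=\mathcal{M}$.

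The only delicate point, such as it is, is the bookkeeping in the last step: identifying the $S_q$-limit of the partial sums with the operator $\intt(\La T)$, which requires tracking convergence simultaneously in $S_q$ and in $S_2$ (using the isometric identification $S_2 \cong L^2(Y\times X)$) and using $\La\in L^\infty$. The hypothesis $q\le 1$ enters only through the $q$-triangle inequality for the Schatten quasi-norm, which fails for $1<q<2$; this is precisely why the clean equivalence is formulated in this range.
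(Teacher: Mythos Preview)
The paper does not give a proof of this proposition; it is quoted directly from \cite[Theorems 8.1 and 8.2]{bs77}. So there is no in-paper argument to compare against.

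Your argument is correct and is essentially the standard one. The ``only if'' direction via rank-one testing is immediate, and your ``if'' direction is exactly where the restriction $0<q\le 1$ is used: the inequality $\norm{A+B}_q^q \le \norm{A}_q^q + \norm{B}_q^q$ lets you sum the Schmidt pieces without loss, and the identification of the $S_q$-limit with $\intt(\La T)$ via the embedding $S_q\hookrightarrow S_2$ and boundedness of pointwise multiplication by $\La\in L^\infty$ on $L^2(Y\times X)$ is handled carefully. One small cosmetic point: by homogeneity, the hypothesis ``$\intt(b\La a)\in S_q$ for all $a\in L^2(X)$, $b\in L^2(Y)$'' together with $\mathcal{M}<\infty$ already gives $\norm{\intt(b\La a)}_q \le \mathcal{M}\norm{a}_{L^2}\norm{b}_{L^2}$ for arbitrary (not just unit) $a,b$, which is what you actually use when you apply it to $a=\overline{\psi_n}$, $b=\phi_n$; this is implicit in what you wrote but worth stating.
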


The next lemma uses the previous two propositions and concerns multipliers which depend on fewer variables than the underlying space.
\begin{lem}
\label{lem:multiplier}
Let $X_1 \subset \real^{d_1}$ be a bounded Lipschitz domain and $X_2 \subset \real^{d_2}$ be arbitrary. Let $\om \in L^{\infty}(\real^m \times X_1)$ be such that for some $2s>d_1$ we have $\om(y,\,\cdot\,) \in N^s_2(X_1)$ for a.e. $y \in \real^m$ and $y \mapsto \norm{\om(y, \,\cdot\,)}_{N^s_2(X_1)} \in L^{\infty}(\real^m)$. Then $\Om(y, x_1, x_2) := \om(y, x_1)$ is a multiplier in $S_q(L^2(X_1 \times X_2),\, L^2(\real^m))$ and $S_{q,\infty}(L^2(X_1 \times X_2),\, L^2(\real^m))$ for all $q \in (q_0, 2)$, where $1/q_0 = 1/2 + s/d_1$. Furthermore, there exists $C$ such that
\begin{align}
\label{eq:mult_ineq}
M_{q}(\Om) + M_{q,\infty}(\Om) \le C\esssup_{y \in \real^m}\norm{\om(y, \,\cdot\,)}_{N^s_2(X_1)}.
\end{align}
The constant $C$ is independent of $\om$ and $X_2$ but dependent on $q, s, d_1$ and $X_1$, albeit independent of translations thereof.
\end{lem}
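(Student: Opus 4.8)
The plan is to prove the result first for exponents $q_1 \in (q_0,1]$ by verifying the criterion of Proposition \ref{prop:bs2}, and then to propagate to all $q \in (q_0,2)$ via Proposition \ref{prop:bs1}. Note that the hypothesis $2s>d_1$ gives $1/q_0 = 1/2 + s/d_1 > 1$, so $q_0<1$ and $(q_0,1]$ is nonempty. Fix $q_1 \in (q_0,1]$. Since $\om \in L^\infty(\real^m\times X_1)$ we have $\Om \in L^\infty(\real^m \times X_1 \times X_2)$, and the kernel $b(y)\om(y,x_1)a(x_1,x_2)$ is square integrable for $a\in L^2(X_1\times X_2)$, $b\in L^2(\real^m)$, so $\intt(b\Om a)$ is well defined. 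By Proposition \ref{prop:bs2} it therefore suffices to bound $\norm{\intt(b\Om a)}_{q_1}$ uniformly by $C\,\esssup_y\norm{\om(y,\,\cdot\,)}_{N^s_2(X_1)}$ over $\norm a_{L^2}=\norm b_{L^2}=1$; the resulting supremum is finite because $y\mapsto\norm{\om(y,\,\cdot\,)}_{N^s_2(X_1)}$ lies in $L^\infty(\real^m)$, and it equals $M_{q_1}(\Om)$.

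The one genuine idea is to peel off the variable $x_2$, on which $\Om$ does not depend. Given $a$ with $\norm a_{L^2(X_1\times X_2)}=1$, set $\tilde a(x_1) = \norm{a(x_1,\,\cdot\,)}_{L^2(X_2)}$, so $\tilde a\in L^2(X_1)$ with $\norm{\tilde a}_{L^2(X_1)}=1$, and write $a(x_1,x_2) = \tilde a(x_1)\,e(x_1,x_2)$, where $e(x_1,\,\cdot\,) = a(x_1,\,\cdot\,)/\tilde a(x_1)$ on $\{\tilde a>0\}$ and $e=0$ elsewhere, so $\norm{e(x_1,\,\cdot\,)}_{L^2(X_2)}\le 1$ for a.e.\ $x_1$. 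Define $J\colon L^2(X_1\times X_2)\to L^2(X_1)$ by $(Ju)(x_1) = \int_{X_2} e(x_1,x_2)u(x_1,x_2)\,dx_2$; Cauchy--Schwarz in $x_2$ followed by integration in $x_1$ gives $\norm J\le 1$. Using Fubini's theorem (legitimate since $\om$ is bounded, $\tilde a\in L^2(X_1)$, and $b$ is finite a.e.) one checks that
\[
\intt(b\Om a) = \intt_{X_1}(b\,\om\,\tilde a)\circ J,
\]
where $\intt_{X_1}(b\,\om\,\tilde a)\colon L^2(X_1)\to L^2(\real^m)$ is the integral operator with kernel $b(y)\om(y,x_1)\tilde a(x_1)$.

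Now I would apply the Birman--Solomyak bound, Proposition \ref{prop:bs}, to $\intt_{X_1}(b\,\om\,\tilde a)$: here $X_1\subset\real^{d_1}$ is a bounded Lipschitz domain, $\om(y,\,\cdot\,)\in N^s_2(X_1)$ for a.e.\ $y$, $b\in L^2(\real^m)\subset L^2_{\loc}(\real^m)$, and $\tilde a\in L^2(X_1)$, which is precisely the admissible exponent $r=2$ because $2s>d_1$. This yields $\intt_{X_1}(b\,\om\,\tilde a)\in S_{q_0,\infty}$ with
\[
\norm{\intt_{X_1}(b\,\om\,\tilde a)}_{q_0,\infty} \le C\Big(\int_{\real^m}\norm{\om(y,\,\cdot\,)}^2_{N^s_2(X_1)}|b(y)|^2\,dy\Big)^{1/2}\norm{\tilde a}_{L^2(X_1)} \le C\,\esssup_y\norm{\om(y,\,\cdot\,)}_{N^s_2(X_1)},
\]
using $\norm b_{L^2}=\norm{\tilde a}_{L^2}=1$. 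Since $q_1>q_0$, the embedding (\ref{eq:sp_incl2}) puts this operator in $S_{q_1}$ with $\norm{\,\cdot\,}_{q_1}\le C(q_1,q_0)\,\esssup_y\norm{\om(y,\,\cdot\,)}_{N^s_2(X_1)}$, and composing with the contraction $J$ (via $s_n(TJ)\le s_n(T)\norm J$) gives $\norm{\intt(b\Om a)}_{q_1}\le C\,\esssup_y\norm{\om(y,\,\cdot\,)}_{N^s_2(X_1)}$. Hence $M_{q_1}(\Om)\le C\,\esssup_y\norm{\om(y,\,\cdot\,)}_{N^s_2(X_1)}$ for every $q_1\in(q_0,1]$. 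Finally, given $q\in(q_0,2)$, pick $q_1\in(q_0,\min\{q,1\})$ (nonempty since $q_0<1$ and $q>q_0$); as $\Om$ is then a multiplier on $S_{q_1}$ with $q_1<2$, Proposition \ref{prop:bs1} shows $\Om$ is a multiplier on $S_q$ and $S_{q,\infty}$ with $M_q(\Om)+M_{q,\infty}(\Om)\le CM_{q_1}(\Om)$, which is (\ref{eq:mult_ineq}). The constant is independent of $X_2$ and of translations of $X_1$, since all quantities in the estimate --- the $N^s_2$-norms, the $L^2$-norms and the operator norms --- are unaffected by translating $X_1$, and translation preserves the Lipschitz character of $X_1$ entering the Birman--Solomyak constant.

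The main obstacle is exactly the reduction $\intt(b\Om a) = \intt_{X_1}(b\,\om\,\tilde a)\circ J$ with $\norm J\le 1$: it is what converts a two-component problem on $L^2(X_1\times X_2)$ into the genuine single-component Birman--Solomyak estimate on $X_1\subset\real^{d_1}$, and it is the reason the exponent involves $d_1$ rather than $d_1+d_2$ and the constant is independent of $X_2$. After this step the argument is a routine combination of Propositions \ref{prop:bs}, \ref{prop:bs2} and \ref{prop:bs1} with the Schatten-class embeddings of Section \ref{chpt:sp}.
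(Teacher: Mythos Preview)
Your proof is correct and follows the same overall route as the paper: reduce to the rank-one--in-$x_2$ situation via $\tilde a(x_1)=\norm{a(x_1,\cdot)}_{L^2(X_2)}$, apply the Birman--Solomyak estimate (Proposition \ref{prop:bs}) on $X_1$ with $r=2$ (using $2s>d_1$), embed $S_{q_0,\infty}\hookrightarrow S_{q_1}$, and then invoke Propositions \ref{prop:bs2} and \ref{prop:bs1}. The only difference is in how the reduction to $X_1$ is carried out: the paper computes $\intt(b\Om a)\,\intt(b\Om a)^*$ directly and shows it equals $\intt(b\om\tilde a)\,\intt(b\om\tilde a)^*$, giving \emph{equality} of singular values, whereas you factor $\intt(b\Om a)=\intt_{X_1}(b\om\tilde a)\circ J$ through the contraction $J$ and obtain the inequality $s_n(\intt(b\Om a))\le s_n(\intt_{X_1}(b\om\tilde a))$. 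Both mechanisms are short and either suffices here; your factorization is perhaps slightly more transparent, while the paper's $TT^*$ computation yields the sharper statement that the singular values actually coincide.
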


The utility of this lemma can be demonstrated in an example. We use the same quantities as in the lemma. Suppose $X_2$ is bounded. It is immediate from the definition of the Besov-Nikol'skii spaces that for a.e. $y$ we have $\Om(y,\,\cdot\,) \in N^s_2(X_1 \times X_2)$ for the same $s$. Now, if we used Propostion \ref{prop:bs} directly on $\intt(b\Om a) : L^2(X_1 \times X_2) \to L^2(\real^m)$, for some appropriate $a$ and $b$, we'd get inclusion in $S_{q,\infty}$ for $1/q = 1/2 + s(d_1+d_2)^{-1}$. But $\Om$ doesn't depend on $x_2$, so we really should expect the value $1/q = 1/2 + s/d_1$, even if we still consider this as an operator on $L^2(X_1 \times X_2)$. This is what is shown here, at least in the case of multipliers.

\begin{proof}
By Proposition \ref{prop:bs1}, it suffices to prove $\Om$ is a multiplier on $S_q$ for each $q>q_0$ sufficiently small. Therefore, take some $q \in (q_0, 1)$. By Proposition \ref{prop:bs2} we must therefore show there exists $C$, independent of $\om$, such that
\begin{align}
\label{eq:int_ineq}
\norm{\intt(b \Om a)}_q \le C\esssup_{y \in \real^m}\norm{\om(y, \,\cdot\,)}_{N^s_2(X_1)}
\end{align}
for all $a \in L^2(X_1 \times X_2)$, $b \in L^2(\real^m)$ with $\norm{a}_{L^2(X_1 \times X_2)} = \norm{b}_{L^2(\real^m)} = 1$.

Let $a$, $b$ be as above. We claim that
\begin{align}
\label{eq:int_claim}
\norm{\intt(b \Om a)}_{S_q(L^2(X_1 \times X_2),\, L^2(\real^m))} = \norm{\intt(b \om \tilde{a})}_{S_q(L^2(X_1),\, L^2(\real^m))}
\end{align}
where $\tilde{a}(x_1) = \norm{a(x_1, \,\cdot\,)}_{L^2(X_2)}$. Clearly, then, $\norm{\tilde{a}}_{L^2(X_1)} = 1$. By a direct application of Proposition \ref{prop:bs} we obtain, for some $C$,
\begin{align*}
\norm{\intt(b \om \tilde{a})}_{S_{q_0,\infty}(L^2(X_1),\, L^2(\real^m))} \le C \esssup_{y \in \real^m}\norm{\om(y, \,\cdot\,)}_{N^s_2(X_1)}. 
\end{align*}
Using the embedding (\ref{eq:sp_incl2}), the above inequality can be used to bound (\ref{eq:int_claim}) to give (\ref{eq:int_ineq}) for some constant $C$.

It therefore remains to prove the claim (\ref{eq:int_claim}). Let $u \in L^2(\real^m)$ and denote $\bx = (x_1,x_2)$. Then straightforward calculations show
\begin{align*}
\big(\intt(b\Om  a)\, \intt(b\Om a)^*u\big)(y) &= \int_{X_1 \times X_2} \int_{\real^m} |a(\bx)|^2 b(y) \overline{b(y')} \Om(y,\bx) \overline{\Om(y',\bx)} u(y')\,dy'\,d\bx\\
&= \int_{X_1} \int_{\real^m} |\tilde{a}(x_1)|^2 b(y) \overline{b(y')} \om(y,x_1) \overline{\om(y',x_1)} u(y')\,dy'\,dx_1\\
&= \big(\intt(b \om \tilde{a}) \,\intt(b \om \tilde{a})^*u\big)(y)
\end{align*}
for $y \in \real^m$. By the definition of singular values,
\begin{align*}
s_k\big(\intt(b\Om a)^*\big) = \lambda_k\big(\intt(b\Om a) \intt(b\Om a)^*\big)^{1/2},
\end{align*}
where $\lambda_k$ denotes the $k$-th positive eigenvalue in a sequence ordered non-increasing and counting multiplicity. Therefore, for all $k$, $s_k(\intt(b\Om a)^*) = s_k(\intt(b\om \tilde{a})^*)$, and so $s_k(\intt(b\Om a)) = s_k(\intt(b\om \tilde{a}))$. This completes the proof of the claim.
\end{proof}

We now apply the preceeding result to show $\mathcal{C}$ of (\ref{eq:c}) can act as a multiplier in suitable $S_{p,\infty}$-spaces. To begin, we consider the function $e^{\tau/4}$ on $\real^3$ for $\tau$ as in (\ref{eq:tau}). By straightforward calculations, we see that the conditions of Lemma \ref{lem:du} are satisfied for this function with $M=1$, $a_1 = 0$, $\al = 1$ and $q=2$. Therefore, for any unit ball $Q \subset \real^3$ we have $e^{\tau/4} \in N^{5/2}_2(Q)$ and
\begin{align}
\label{eq:tau_besov}
\norm{e^{\tau/4}}_{N^{5/2}_2(Q)} \le C_0
\end{align}
for some $C_0$ independent of $Q$.

Since $\mathcal{C}$ is defined as a product of such exponentials, we obtain the following lemma. Here, $1 \le K \le N-1$ and as usual we write $\check\bx = (x_1, \dots, x_K)$, $\hbx = (x_{K+1}, \dots, x_N)$.
For any $\bnu \in \real^{3K}$ set
\begin{align}
\label{eq:qnu}
Q_{\bnu} = [0,1)^{3K}+\bnu.
\end{align}
\begin{lem}
\label{lem:c_multiplier}
Take any $\bnu \in \real^{3K}$. Then $\La(\hbx,\check\bx) = \mathcal{C}(\check\bx, \hbx)$, for $\mathcal{C}$ defined in (\ref{eq:c}), is a multiplier on $S_{q,\infty} = S_{q,\infty}(L^2(Q_{\bnu}), L^2(\real^{3(N-K)}))$ for any $3/4<q <2$ and
\begin{align*}
M_{q,\infty}(\La) \le C
\end{align*}
for some $C$ potentially depending on $q$ and $N$, but independent of $\bnu$.
\end{lem}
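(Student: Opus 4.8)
The plan is to exhibit $\mathcal{C}$ as a product of $K(N-K)$ elementary factors $\mathcal{C}_{l,m}(\check\bx,\hbx)=\exp\big(\tfrac14\tau(x_l-x_m)\big)$, each depending on a single one of the coordinates $x_1,\dots,x_K$ (with the companion $x_m$ belonging to $\hbx$), to check via Lemma \ref{lem:multiplier} that each $\mathcal{C}_{l,m}$ is a multiplier on $S_{q,\infty}(L^2(Q_{\bnu}),L^2(\real^{3(N-K)}))$ with norm bounded uniformly in $\bnu$, and then to combine the factors using submultiplicativity of multiplier norms. The choice $s=5/2$ in Lemma \ref{lem:multiplier}, dictated by the smoothness of $e^{\tau/4}$ recorded in (\ref{eq:tau_besov}), produces precisely the threshold $1/q_0=\tfrac12+s/3=\tfrac43$, i.e. $q_0=\tfrac34$, which is the range claimed.

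To carry this out, I would fix a pair $(l,m)$ and, after relabelling the $\real^3$-blocks of $\real^{3K}$ (a unitary equivalence carrying $Q_{\bnu}$ to a cube of the same form), assume $l=1$ and write $Q_{\bnu}=X_1\times X_2$, with $X_1$ a translate of the open unit cube in $\real^3$ — a bounded Lipschitz domain, and passing to the interior changes neither $L^2(Q_{\bnu})$ nor the operators involved — and $X_2$ a translate of a cube in $\real^{3(K-1)}$ (a point when $K=1$). I would then apply Lemma \ref{lem:multiplier} with $d_1=3$, $d_2=3(K-1)$, codomain $L^2(\real^{3(N-K)})$, $s=\tfrac52$, and $\om(\hbx,x_1)=\exp\big(\tfrac14\tau(x_1-x_m)\big)$ (so that the variable called $y$ in that lemma is our $\hbx$). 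Its hypotheses are checked directly: $2s=5>3=d_1$; $\tau$ is bounded, so $\om\in L^{\infty}$; and for each fixed $\hbx$ the function $\om(\hbx,\cdot)$ is a translate of $e^{\tau/4}$, whence by (\ref{eq:tau_besov}) together with monotonicity of the restriction norm in the domain (the translated unit cube $X_1$ has circumradius $\sqrt3/2<1$ and so sits inside a unit ball) one gets $\esssup_{\hbx}\norm{\om(\hbx,\cdot)}_{N^{5/2}_2(X_1)}\le C_0$, with $C_0$ independent of $\hbx$ and of $\bnu$. Lemma \ref{lem:multiplier} then yields that $\mathcal{C}_{l,m}$ is a multiplier on $S_{q,\infty}(L^2(Q_{\bnu}),L^2(\real^{3(N-K)}))$ for all $q\in(q_0,2)$ with $M_{q,\infty}(\mathcal{C}_{l,m})\le CC_0$, where $C$ depends only on $q$ and $N$ and — by the translation-insensitivity built into that lemma's statement — is independent of $\bnu$ and of $(l,m)$.

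To assemble the full multiplier, I would use that if $\La_1,\La_2\in L^{\infty}$ are multipliers on $S_{q,\infty}$ for some $q<2$, then so is $\La_1\La_2$ with $M_{q,\infty}(\La_1\La_2)\le M_{q,\infty}(\La_1)M_{q,\infty}(\La_2)$: for $T\in S_{q,\infty}\subset S_2$ the operator $\intt(\La_2T)$ lies in $S_{q,\infty}$ and has kernel $\La_2T$ almost everywhere, so $\intt(\La_1\La_2T)=\intt\big(\La_1\cdot(\text{kernel of }\intt(\La_2T))\big)\in S_{q,\infty}$ with the asserted bound. Since each $\mathcal{C}_{l,m}$ is bounded, iterating over the $K(N-K)$ factors shows that $\La(\hbx,\check\bx)=\mathcal{C}(\check\bx,\hbx)$ is a multiplier on $S_{q,\infty}(L^2(Q_{\bnu}),L^2(\real^{3(N-K)}))$ for every $\tfrac34<q<2$, with $M_{q,\infty}(\La)\le(CC_0)^{K(N-K)}$, which depends only on $q$ and $N$.

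I expect the only point needing genuine care is the uniformity of every constant in $\bnu$; this reduces to two facts already available, namely that (\ref{eq:tau_besov}) holds with one fixed constant on every unit ball irrespective of its centre, and that the constant in Lemma \ref{lem:multiplier} is insensitive to translations of $X_1$ and independent of $X_2$. The factorization of $\mathcal{C}$, the coordinate relabelling, the verification of the hypotheses of Lemma \ref{lem:multiplier}, and the submultiplicativity of multiplier norms are all routine.
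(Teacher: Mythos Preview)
Your proposal is correct and follows essentially the same route as the paper: factor $\mathcal{C}$ into the $K(N-K)$ elementary pieces $\exp(\tau(x_l-x_m)/4)$, apply Lemma~\ref{lem:multiplier} to each with $X_1$ the $\real^3$-cube in the $l$-th slot, $s=5/2$, and the bound (\ref{eq:tau_besov}), then multiply the multiplier bounds. The only differences are cosmetic --- you relabel so that $l=1$ while the paper applies Lemma~\ref{lem:multiplier} directly with $X_1=Q_l$, and you are slightly more explicit about the unit cube sitting in a unit ball and about submultiplicativity of $M_{q,\infty}$.
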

\begin{proof}
First, we write $\La(\hbx,\check\bx) = \prod_{l=1}^K\prod_{m=K+1}^N \La_{l,m}(\hbx,\check\bx)$ for $\La_{l,m}(\hbx,\check\bx) = \exp(\tau(x_l - x_m)/4)$.  It suffices to consider individual $\La_{l,m}$ since clearly products of multipliers are multipliers in the same space. Therefore, take some $l \in \{1,\dots,K\}$ and $m \in \{K+1, \dots, N\}$. Next, for $\bnu = (\nu_1, \dots, \nu_K)$ considered fixed, we write $Q_j = [0,1)^3 + \nu_j$ for each $j=1,\dots,K$, and so $Q_{\bnu} = Q_1 \times \dots \times Q_K$. By (\ref{eq:tau_besov}) we have
\begin{align*}
\sup_{x_m \in \real^3}\norm{\exp(\tau(\,\cdot\,-x_m)/4)}_{N^{5/2}_2(Q_l)} \le C_0.
\end{align*}
Applying Lemma \ref{lem:multiplier} with $X_1 = Q_l$ and $X_2 = Q_1 \times Q_{l-1} \times Q_{l+1} \times Q_K$ we obtain the required result for $\La_{l,m}$.

\end{proof}

\section{Proof of Proposition \ref{prop:psi_k}}
\label{chpt:psi_k}
As in (\ref{eq:qnu}), we set $Q_{\bn} = [0,1)^{3K} + \bn$ for $\bn \in \integers^{3K}$. The notation $\intt(\,\cdot\,)$ was defined in (\ref{eq:io_def}). We start with a general proposition which brings together results of the previous sections.
\begin{prop}
\label{prop:general}
Let $1 \le K \le N-1$. For $\al >0$, let $u \in C^{\infty}(\real^{3N}\backslash\Sigma)$ obey (\ref{eq:mb_pointwise}) for some $A>0$. Denote $\mathtt{U}(\hbx, \check\bx) = u(\check\bx, \hbx)$ for $\check\bx \in \real^{3K}$ and $\hbx \in \real^{3(N-K)}$. We set
\begin{align}
\label{eq:final_q}
\frac{1}{q} = \frac{1}{2} + \frac{2\al + 3}{6K}.
\end{align} 
Furthermore, take any $r =2$ if $2\al>3(K-1)$ and $r \in (6K/(2\al + 3),\, \infty]$ otherwise. Let $a = a(\check\bx)$ lie in $L_{loc}^r(\real^{3K})$ and $b = b(\hbx)$ lie in $L^2(\real^{3(N-K)})$. Also, suppose $\La = \La(\hbx, \check\bx) \in L^{\infty}(\real^{3N})$ is a multiplier on $S_{q,\infty}(L^2(Q_{\bn}),\, L^2(\real^{3(N-K)}))$ for each $\bn \in \integers^{3K}$. Denote $L_{\bn} := M_{q,\infty}(\La)$ for each $\bn \in \integers^{3K}$. 
Then $\textnormal{Int}(b \La\mathtt{U} a) \in S_{q,\infty}(L^2(\real^{3K}), L^2(\real^{3(N-K)}))$ and there exists some finite $C$ such that
\begin{align}
\label{eq:general}
\norm{\intt(b \La \mathtt{U} a)}_{q,\infty} \le CA \norm{b}_{L^2(\real^{3(N-K)})} \Big(\sum_{\bn \in \integers^{3K}} L^q_{\bn} \norm{a}^q_{L^r(Q_{\bn})}\Big)^{1/q},
\end{align}
under the assumption the right-hand side is finite.
\end{prop}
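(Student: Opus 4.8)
The plan is to decompose the operator $\intt(b\La\mathtt U a)$ over the unit-cube grid $\{Q_{\bn}\}_{\bn\in\integers^{3K}}$ on the $\check\bx$-side, estimate each piece using the machinery already assembled (Lemma \ref{lem:pointwise} to place $\mathtt U$ into a Besov-Nikol'skii space of smoothness $s=\al+3/2$, Proposition \ref{prop:bs} for the resulting singular-value bound, and the multiplier hypothesis on $\La$ to absorb $\La$ without cost), and finally reassemble the pieces via the quasi-orthogonality triangle inequality, Proposition \ref{prop:triangle}. The exponent bookkeeping is exactly arranged so that $1/q=1/2+s/(3K)$ with $s=\al+3/2$, which is (\ref{eq:final_q}), and so that the admissible range of $r$ in Proposition \ref{prop:bs} (with $d=3K$) matches the condition $r=2$ if $2s>3K$, i.e. $2\al>3(K-1)$, and $r>3K/s=6K/(2\al+3)$ otherwise.

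\textbf{Step 1: localization.} For each $\bn\in\integers^{3K}$ let $\chi_{\bn}$ be a smooth partition-of-unity function adapted to a neighbourhood of $Q_{\bn}$ (say supported in a fixed dilate $2Q_{\bn}$, with $\sum_{\bn}\chi_{\bn}=1$ on $\real^{3K}$ and uniformly bounded derivatives), and set $T_{\bn}=\intt\big(b\,\La\,\mathtt U\,(\chi_{\bn}a)\big)$, so that $\intt(b\La\mathtt U a)=\sum_{\bn}T_{\bn}$. Since the $\chi_{\bn}a$ have essentially disjoint supports in the $\check\bx$-variable (each $x$-fibre meets only boundedly many of the dilates $2Q_{\bn}$), one arranges — possibly after splitting the lattice into $O(1)$ sublattices according to residues mod $3$ in each coordinate, so that the supports are genuinely disjoint — that $T_{\bn}^*T_{\bm}=0$ for $\bn\ne\bm$ within each sublattice. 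This is what makes Proposition \ref{prop:triangle} applicable (note $q<2$ here).

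\textbf{Step 2: bound on a single $T_{\bn}$.} Fix $\bn$. By Lemma \ref{lem:pointwise}, applied with the bounded open sets $X_1,\dots,X_K$ taken to be the unit-interval factors of the (slightly enlarged) cube $2Q_{\bn}$ — whose diameters are fixed, independent of $\bn$ — we get $\mathtt U(\hbx,\,\cdot\,)\in N^s_2$ on that cube with $\norm{\mathtt U(\hbx,\,\cdot\,)}_{N^s_2}\le CA$ uniformly in $\hbx$ and in $\bn$. Now apply Proposition \ref{prop:bs} on the bounded Lipschitz domain $X=2Q_{\bn}\subset\real^{3K}$, with kernel $\mathtt U$, weight $b$ on the $\hbx$-side, and density $\chi_{\bn}a\in L^r(X)$; this yields $\intt(b\,\mathtt U\,\chi_{\bn}a)\in S_{q,\infty}$ with
\begin{align*}
\norm{\intt(b\,\mathtt U\,\chi_{\bn}a)}_{q,\infty}\le CA\,\norm{b}_{L^2(\real^{3(N-K)})}\,\norm{a}_{L^r(2Q_{\bn})}.
\end{align*}
Reinserting $\La$: by hypothesis $\La$ is a multiplier on $S_{q,\infty}(L^2(Q_{\bn}),L^2(\real^{3(N-K)}))$ with constant $L_{\bn}=M_{q,\infty}(\La)$, and since the operator $T_{\bn}$ has kernel supported (in $\check\bx$) in $2Q_{\bn}$, one can pass to a covering by finitely many unit cubes, or simply apply the multiplier bound on the relevant cube, to get $\norm{T_{\bn}}_{q,\infty}\le C\,L_{\bn}\,A\,\norm{b}_{L^2}\,\norm{a}_{L^r(2Q_{\bn})}$. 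Finally, $\norm{a}_{L^r(2Q_{\bn})}^q$ is controlled by a bounded sum of $\norm{a}_{L^r(Q_{\bm})}^q$ over the $O(1)$ lattice cubes $Q_{\bm}$ meeting $2Q_{\bn}$.

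\textbf{Step 3: reassembly.} Summing over $\bn$ within each of the $O(1)$ sublattices via Proposition \ref{prop:triangle} gives
\begin{align*}
\norm{\intt(b\La\mathtt U a)}_{q,\infty}^q\le \tfrac{2}{2-q}\sum_{\bn}\norm{T_{\bn}}_{q,\infty}^q\le C\,A^q\,\norm{b}_{L^2}^q\sum_{\bn}L_{\bn}^q\,\norm{a}_{L^r(Q_{\bn})}^q,
\end{align*}
after redistributing the $L^r$-norms over the finitely-overlapping cubes; summing the $O(1)$ sublattice contributions preserves the same form. Taking $q$-th roots gives (\ref{eq:general}).

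\textbf{Main obstacle.} The genuinely delicate point is Step 2's reinsertion of the multiplier $\La$ together with the geometry of supports: Proposition \ref{prop:bs} naturally produces an operator on $L^2$ of the enlarged domain $2Q_{\bn}$ (needed because the localizing cutoffs have supports slightly larger than $Q_{\bn}$), whereas the multiplier hypothesis is stated on the unit cube $Q_{\bn}$; one must either enlarge the cube in the multiplier statement (harmless, since the multiplier constants in Lemma \ref{lem:c_multiplier} are translation-uniform and the finitely-many-cube covering argument costs only a constant) or cut off $a$ so that the kernel is genuinely supported in $Q_{\bn}$ before applying the multiplier, at the price of redistributing $\norm{a}_{L^r}$ over neighbouring cubes. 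Keeping the constant $C$ uniform in $\bn$ throughout — which is exactly where the translation-invariance built into Lemmas \ref{lem:du}, \ref{lem:pointwise}, \ref{lem:multiplier} is used — is the crux; the rest is bookkeeping of the Schatten exponents.
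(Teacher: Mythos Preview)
Your proof follows the same overall plan as the paper's --- localize to unit cubes, apply Lemma~\ref{lem:pointwise} and Proposition~\ref{prop:bs} on each cube, reinsert $\La$ as a multiplier, then reassemble via Proposition~\ref{prop:triangle} --- but you have made it more complicated than necessary, and the complications introduce a small gap at the stated level of generality.

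The paper simply uses the sharp cutoffs $\mathds{1}_{Q_{\bn}}$ on the half-open cubes $Q_{\bn}=[0,1)^{3K}+\bn$, which partition $\real^{3K}$ exactly. This buys three simplifications: (i) no enlargement to $2Q_{\bn}$ is needed, since Proposition~\ref{prop:bs} is applied directly with $X=Q_{\bn}$ (the cutoff multiplies only the weight $a$, not the kernel $\mathtt U$, so Lemma~\ref{lem:pointwise} on $Q_{\bn}$ itself suffices); (ii) the multiplier hypothesis on $S_{q,\infty}(L^2(Q_{\bn}),L^2(\real^{3(N-K)}))$ then matches the domain exactly, so your ``main obstacle'' never arises; (iii) the $T_{\bn}$ have genuinely disjoint $\check\bx$-supports without any sublattice splitting. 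Your option~2 in the ``main obstacle'' paragraph --- cutting off $a$ so the kernel sits in $Q_{\bn}$ --- is essentially this, but you treat it as a fix rather than the primary method.

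Two points on your version. First, disjoint $\check\bx$-supports give $T_{\bm}T_{\bn}^*=0$, not $T_{\bn}^*T_{\bm}=0$ as you wrote: the latter would require orthogonal ranges in $L^2(\real^{3(N-K)})$, which you do not have. Either condition is admissible in Proposition~\ref{prop:triangle}, so this is only a slip. Second, and more substantively: when you cover $2Q_{\bn}$ by finitely many unit translates $Q_{\bm}$ to invoke the multiplier hypothesis, you pick up the constants $L_{\bm}$ for those neighbouring $\bm$, not $L_{\bn}$ alone. After redistribution your right-hand side becomes $\sum_{\bn}(\max_{\bm\sim\bn}L_{\bm})^q\norm{a}^q_{L^r(Q_{\bn})}$ rather than the stated $\sum_{\bn}L_{\bn}^q\norm{a}^q_{L^r(Q_{\bn})}$, and these coincide only if the $L_{\bn}$ are locally comparable --- true in the application by Lemma~\ref{lem:c_multiplier}, but not assumed in Proposition~\ref{prop:general}. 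The sharp-cutoff argument avoids this entirely.
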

\begin{proof}
Take any $\bn \in \integers^{3K}$. By Lemma \ref{lem:pointwise} we get
\begin{align}
\label{eq:u_besov}
\norm{u(\,\cdot\,, \hbx)}_{N^{\al+3/2}_2(Q_{\bn})} \le CA
\end{align}
for a.e. $\hbx \in \real^{3(N-K)}$. Here, $C$ is independent of $\bn$.

We now apply Proposition \ref{prop:bs} with $d=3K$, $m=3(N-K)$, $s = \al+3/2$ and $X = Q_{\bn}$. This gives that $\intt(b\mathtt{U}a) : L^2(Q_{\bn}) \to L^2(\real^{3(N-K)})$ lies in $S_{q,\infty}$ for $q$ as in (\ref{eq:final_q}) and, due to (\ref{eq:u_besov}), we get
\begin{align*}
\norm{\intt(b\mathtt{U}a)}_{q,\infty} \le CA \norm{b}_{L^2(\real^{3(N-K)})} \norm{a}_{L^r(Q_{\bn})}
\end{align*}
for some constant $C$ independent of $\bn$. The following is then immediate by the definition of multipliers,
\begin{align*}
\norm{\intt(b\La\mathtt{U}a)}_{q,\infty} \le CA L_{\bn} \norm{b}_{L^2(\real^{3(N-K)})} \norm{a}_{L^r(Q_{\bn})}.
\end{align*}

For each $\bn$ we can define $T_{\bn} = \intt(b\La\mathtt{U}a\mathds{1}_{Q_{\bn}})$, for $\mathds{1}_{Q_{\bn}} = \mathds{1}_{Q_{\bn}}(\check\bx)$, which we consider as an operator $L^2(\real^{3K}) \to L^2(\real^{3(N-K)})$. Then $\intt(b\La\mathtt{U}a)$ as an operator $L^2(\real^{3K}) \to L^2(\real^{3(N-K)})$ is the operator $\sum_{\bn \in \integers^{3K}}T_{\bn}$. We note that $T_{\bm}^{}T_{\bn}^* = 0$ whenever $\bm \ne \bn$. By Proposition \ref{prop:triangle} for operators $T_{\bn}$, as above, we obtain precisely (\ref{eq:general}).
\end{proof}

The proof of Proposition \ref{prop:psi_k} is then a straightforward application of the above proposition.
\begin{proof}[Proof of Proposition \ref{prop:psi_k}]
Let $\mathcal{A}(\check\bx)$, $\mathcal{B}(\hbx)$ and $\mathcal{C}(\check\bx, \hbx)$ be the functions defined in (\ref{eq:a})-(\ref{eq:c}) and let $\mu$ be the function defined in (\ref{eq:mu_def}). We then set $\mathtt{U}(\hbx, \check\bx) = \mu(\check\bx, \hbx)$ and $\La(\hbx, \check\bx) = \mathcal{C}(\check\bx, \hbx)$, where $\check\bx \in \real^{3K}$ and $\hbx \in \real^{3(N-K)}$. In view of (\ref{eq:psi_decomp}), we have
\begin{align*}
\Psi^{(K)} = \intt(\mathcal{B} \La \mathtt{U} \mathcal{A}).
\end{align*}

We apply now Proposition \ref{prop:general}. The function $u = \mu$ obeys the assumptions of this proposition for $\al = 2$, by (\ref{eq:mu}). We take $a = \mathcal{A}$, $b = \mathcal{B}$ and any $r$ as required by the proposition. We take $\La$ as above, which is an appropriate multiplier due to Lemma \ref{lem:c_multiplier} (where $\La$ is defined identically). There then exists some $C$ such that
\begin{align*}
\norm{\Psi^{(K)}}_{q,\infty} \le C \norm{\mathcal{B}}_{L^2(\real^{3(N-K)})} \Big(\sum_{\bn \in \integers^{3K}} \norm{\mathcal{A}}^q_{L^{r}(Q_{\bn})}\Big)^{1/q}
\end{align*}
which is a finite quantity.
\end{proof}

\appendix
\section{Proof of Proposition \ref{prop:phi_bound}}
\label{chpt:app}
We prove Proposition \ref{prop:phi_bound} in two steps. Firstly for derivatives up to order two, then for all remaining orders.

\subsection{Derivative of $\phi$ up to order two}
This follows from a result by S. Fournais, M. and T. Hoffmann-Ostenhof and T.\O. S\o rensen in \cite{fhos05}, which we now state.

Take some $\chi \in C_c^{\infty}(\real)$ with $0 \le \chi \le 1$ such that $\chi(x) = 1$ for $|x| \le 1$ and $\chi(x) = 0$ for $|x| \ge 2$. Then we can define on $\real^{3N}$,
\begin{align}
\label{eq:phio}
\phi_{opt} = e^{-G-H}\psi = e^{F-G-H}\phi,
\end{align}
where $F$ and $\phi$ are (\ref{eq:f_def}) and (\ref{eq:phi_def}), respectively, and
\begin{align}
G(\bx) = -\frac{Z}{2}\sum_{1 \le j \le N}\chi(|x_j|)|x_j| + \frac{1}{4}\sum_{1\le j<k \le N}\chi(|x_j-x_k|)|x_j-x_k|,
\end{align}
\begin{align}
H(\bx) = C_0 Z \sum_{1 \le j<k \le N} \chi(|x_j|)\chi(|x_k|) (x_j \cdot x_k)\log(|x_j|^2 + |x_k|^2),
\end{align}
where $C_0 = (2-\pi)/(12\pi)$.
\begin{thm}\cite[Theorem 1.5/Remark 1.6]{fhos05}
\label{thm:phio}
For all $0<r<R$ there exists a constant $C$ such that
\begin{align*}
\norm{\phi_{opt}}_{W^{2,\infty}(B(\bx, r))} \le C\norm{\phi_{opt}}_{L^{\infty}(B(\bx, R))}\quad \text{for all} \quad \bx \in \real^{3N}.
\end{align*}
\end{thm}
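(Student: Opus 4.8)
This is the elliptic regularity estimate of \cite{fhos05}, and the plan is to reproduce its proof in the following three steps.

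\textbf{Step 1: the equation for $\phi_{opt}$.} First I would write $\psi = e^{G+H}\phi_{opt}$ and substitute into (\ref{eq:se}); a direct computation shows that $\phi_{opt}$ solves, on all of $\real^{3N}$, the uniformly elliptic equation
\begin{align*}
-\Delta\phi_{opt} - 2\nabla(G+H)\cdot\nabla\phi_{opt} + \big(V - \Delta(G+H) - |\nabla(G+H)|^2 - E\big)\phi_{opt} = 0.
\end{align*}
The point is that the choice of $G$ matches the Coulomb singularities of $V$ exactly: since $\Delta|x| = 2/|x|$ on $\real^3$, the piece $-\frac{Z}{2}\chi(|x_j|)|x_j|$ of $G$ contributes $-Z/|x_j|$ to $\Delta(G+H)$ and cancels the nuclear term of $V$ near $x_j = 0$, while $\frac14\chi(|x_j-x_k|)|x_j-x_k|$ contributes $1/|x_j-x_k|$ and cancels the electron--electron term near $x_j = x_k$. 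So I would record that the drift $\vec b := -2\nabla(G+H)$ and the zeroth-order coefficient $c := V - \Delta(G+H) - |\nabla(G+H)|^2 - E$ both lie in $L^\infty(\real^{3N})$, with bounds independent of the base point $\bx$ (a translate of a collision configuration is again a collision configuration), and with non-smooth part supported in a fixed neighbourhood of $\Sigma$ thanks to the cutoffs.

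\textbf{Step 2: $W^{2,p}$ bootstrap.} Since $\psi \in L^\infty$ (Kato) and $G+H$ is bounded, $\phi_{opt} \in L^\infty_{loc}$; together with $\vec b, c \in L^\infty$ and the translation invariance of $\Delta$, I would apply the interior Calder\'on--Zygmund estimates to obtain, for every $p < \infty$ and concentric balls,
\begin{align*}
\norm{\phi_{opt}}_{W^{2,p}(B(\bx,r_1))} \le C_p\big(\norm{\nabla\phi_{opt}}_{L^p(B(\bx,r_2))} + \norm{\phi_{opt}}_{L^p(B(\bx,r_2))}\big)
\end{align*}
with $C_p$ independent of $\bx$, then bound $\norm{\nabla\phi_{opt}}_{L^p}$ by interpolation between $\norm{\phi_{opt}}_{L^p}$ and $\norm{\phi_{opt}}_{W^{2,p}}$ and absorb, and finally use Sobolev embedding to deduce $\phi_{opt} \in C^{1,\alpha}_{loc}$ for every $\alpha \in [0,1)$, again uniformly in $\bx$ after a radius-shrinking iteration. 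This yields the asserted estimate with $C^{1,\alpha}$ in place of $W^{2,\infty}$ --- not yet the endpoint.

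\textbf{Step 3: the endpoint and the role of $H$ (the hard part).} The obstruction is that membership in $W^{2,p}$ for every finite $p$ does not give $W^{2,\infty}$: the right-hand side $\vec b\cdot\nabla\phi_{opt} + c\,\phi_{opt}$ is only bounded, since $\vec b$ and $c$ jump across $\Sigma$ and, worse, behave logarithmically when an electron collides with the nucleus while a second electron is also near the nucleus. The correction $H$, with the specific constant $C_0 = (2-\pi)/(12\pi)$, is designed precisely to cancel the leading part of that worst contribution. So here I would split the right-hand side as $f_{\mathrm{good}} + f_{\mathrm{bad}}$, where $f_{\mathrm{good}}$ is locally H\"older continuous and hence contributes, via the classical Schauder/Newtonian-potential estimate, a $C^{2,\alpha}_{loc} \subset W^{2,\infty}_{loc}$ term, while $f_{\mathrm{bad}}$ --- after the subtraction built into $H$ --- has the special anisotropic ``$x_j\cdot x_k$ times $\log$'' structure whose Newtonian potential I would show, by explicit computation, to have bounded Hessian (this is where the value of $C_0$ gets forced). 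Representing $\phi_{opt}$ locally as a Newtonian potential of the right-hand side plus a harmonic remainder and differentiating twice then gives $\nabla^2\phi_{opt} \in L^\infty_{loc}$, and carrying the constants through Steps 1--3 (uniform coefficient bounds, translation invariance, a covering argument to pass from $R$ down to $r$) gives $C$ independent of $\bx$. I expect this anisotropic potential-theoretic computation, together with checking that $C_0$ is the correct constant, to be the main difficulty; it is the content of \cite{fhos05}, which I would simply follow.
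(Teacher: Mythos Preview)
The paper does not prove this theorem at all: it is quoted verbatim from \cite{fhos05} and used as a black box in the appendix. So there is nothing to compare your proposal against in the present paper. Your three-step outline is a fair summary of the strategy in \cite{fhos05}, but since the current paper simply cites the result, any reproduction of that argument would be material beyond what the paper itself contains.
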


Now, $F, G, H \in L^{\infty}(\real^{3N})$, and we have $\partial(G-F) \in L^{\infty}(\real^{3N})$ where $\partial$ represents any partial derivative in $\real^{3N}$ of any order. In addition, for any $\al \in \naturals_0^3$, $|\al| \le 2$, we have $\partial_{x_j}^{\al} H \in L^{\infty}(\real^{3N})$, for $j=1,\dots,N$. Thus, applying the Leibniz rule on (\ref{eq:phio}), using Theorem \ref{thm:phio}, we obtain the following. For any $|\al| \le 2$, $1\le j \le N$ and $0<r<R$ there exists some $C$ such that
\begin{align}
\label{eq:phi_2nd}
\norm{\partial^{\al}_{x_j}\phi}_{L^{\infty}(B(\bx, r))} \le C\norm{\psi}_{L^{\infty}(B(\bx, R))}
\end{align}
for all $\bx \in \real^{3N}$. For any choice of $R$, for example $R=1$, we can then use the above bound along with (\ref{eq:exp_decay}) to obtain (\ref{eq:phi_bound}) in the case of $|m| \le 2$.

\subsection{Higher order derivatives of $\phi$}
In this section, we summarise theory present in \cite[Section 3]{hs23}, adapted and simplified for our purposes. This will be used to complete the proof of Proposition \ref{prop:phi_bound}. We start by considering the equation
\begin{align}
\label{eq:general_pde}
(-\Delta + \ba(\bx) \cdot \nabla + b(\bx))u = 0
\end{align}
in $\real^{3N}$ where the coefficients $\ba : \real^{3N} \to \complex^{3N}$ and $b : \real^{3N} \to \complex$ are locally bounded.

We will consider $\bx_0 \in \real^{3N}$, $l \in (0,1]$, $R>0$ and $j \in \{1,\dots,N\}$ for which these coefficients obey the following condition. For each $m \in \naturals_0^3$ we can find a $C$ such that
\begin{align}
\label{eq:ab_condition}
\norm{\partial_{x_j}^m\ba}_{L^{\infty}(B(\bx_0,\, Rl))} + \norm{\partial_{x_j}^m b}_{L^{\infty}(B(\bx_0,\, Rl))} \le Cl^{-|m|}
\end{align}
where $C$ may depend on $R$. However, when we will consider more than one value of either $\bx_0$ or $l$ we require $C$ to be independent of both.

The following result is \cite[Theorem 3.2]{hs23}, adapted for our purposes. The original statement uses so-called cluster sets, however we do not need this, and we only apply the result to single clusters of the form of singletons i.e. $\{j\}$ for $j=1,\dots,N$.
\begin{prop}
\label{prop:u}
Let $u \in W_{loc}^{1,2}(\real^{3N})$ be a weak solution to (\ref{eq:general_pde}) in $\real^{3N}$. Take some $R>0$, $j \in \{1,\dots,N\} $, $\bx_0 \in \real^{3N}$ and $l \in (0,1]$ which obey the condition (\ref{eq:ab_condition}). Then, for each $m \in \naturals_0^3$, $|m| \ge 2$, and all $r < R$ the weak derivative $\partial_{x_j}^m u$ lies in $C^1(\overline{B(\bx_0, r)})$ and there exists a constant $C$ such that
\begin{align}
\label{eq:u_bound}
\norm{\partial_{x_j}^m u}_{L^{\infty}(B(\bx_0,\, rl))} \le Cl^{2-|m|}\Big(\max_{|\al| \le 2}\norm{\partial_{x_j}^{\al}u}_{L^{\infty}(B(\bx_0, Rl))} + \norm{\nabla u}_{L^{\infty}(B(\bx_0, Rl))} + \norm{u}_{L^{\infty}(B(\bx_0, Rl))} \Big)
\end{align}
where the constants depend on $m$, $r$ and $R$, but do not depend on values of $\bx_0$ and $l$ for which the same constants in (\ref{eq:ab_condition}) may be used.
\end{prop}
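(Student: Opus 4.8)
The plan is to prove Proposition~\ref{prop:u} by combining an interior elliptic regularity scheme on dyadic annuli with the given coefficient condition (\ref{eq:ab_condition}), while tracking the scaling in $l$ carefully. The key observation is that the singularity structure of $\ba$ and $b$ is isotropic in the single variable $x_j$ only near the center, so after a rescaling $\bx = \bx_0 + l\,\by$ the rescaled function $\tilde u(\by) = u(\bx_0 + l\by)$ solves an equation of the form $(-\Delta + l\,\tilde\ba\cdot\nabla + l^2\tilde b)\tilde u = 0$ whose coefficients, by (\ref{eq:ab_condition}), have $\partial_{y_j}^m$-derivatives bounded uniformly in $l$ on $B(0,R)$. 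Thus it suffices to prove the unscaled estimate $\norm{\partial_{y_j}^m\tilde u}_{L^\infty(B(0,r))} \le C\big(\max_{|\al|\le 2}\norm{\partial_{y_j}^\al\tilde u}_{L^\infty(B(0,R))} + \norm{\nabla\tilde u}_{L^\infty(B(0,R))} + \norm{\tilde u}_{L^\infty(B(0,R))}\big)$ with a constant independent of $l$, and then undo the scaling, which produces exactly the factor $l^{2-|m|}$ in (\ref{eq:u_bound}) since each of the $|m|$ derivatives $\partial_{x_j} = l^{-1}\partial_{y_j}$ contributes $l^{-1}$ and the two ``$\le 2$'' terms on the right absorb an $l^2$ from the rescaled coefficients.

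Next, for the unscaled statement I would argue by induction on $|m| \ge 2$. The base step $|m| = 2$ and the passage from the $W^{1,2}_{\loc}$ weak solution to a function with two bounded $x_j$-derivatives is essentially the content of the adapted \cite[Theorem 3.2]{hs23} machinery, so the substantive part is the inductive step: assuming $\partial_{x_j}^\al u$ is controlled for all $\al$ with $|\al| \le k$, differentiate the equation (\ref{eq:general_pde}) $k-1$ times in the $x_j$-directions only. Because $-\Delta$ commutes with $\partial_{x_j}^\beta$, and the Leibniz rule applied to $\ba\cdot\nabla u$ and $bu$ only ever produces $\partial_{x_j}^\gamma\ba$, $\partial_{x_j}^\gamma b$ (with $|\gamma| \le k-1$, hence bounded by (\ref{eq:ab_condition})) multiplied against $\partial_{x_j}^\delta u$ or $\partial_{x_j}^\delta\nabla u$ with $|\delta| \le k$, the function $w = \partial_{x_j}^{\beta}u$ for $|\beta| = k-1$ satisfies $-\Delta w = g$ with $g$ bounded on a slightly smaller ball by the inductive hypothesis together with classical interior Schauder/$L^p$ estimates to upgrade the lower-order $\nabla u$ control. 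Then standard interior elliptic estimates for the Laplacian give $w \in C^{1,\alpha}$ with the bound on $\norm{\nabla w}_{L^\infty}$ on a still smaller ball, which yields the $|m| = k+1$ bound; iterating through the finitely many annular shrinkages between radius $R$ and radius $r$ gives the claimed constant depending only on $m, r, R$.

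The one delicate point worth spelling out is why differentiating only in $x_j$ keeps the right-hand side under control: the full gradient $\nabla u$ appears in (\ref{eq:general_pde}) through $\ba\cdot\nabla u$, so after applying $\partial_{x_j}^{\beta}$ one is left with terms like $\ba\cdot\nabla(\partial_{x_j}^{\beta}u)$ which involve derivatives of $w$ in \emph{all} directions. I would handle this by noting $-\Delta w + \ba\cdot\nabla w = \tilde g$ where $\tilde g$ collects the Leibniz remainder terms (all of which involve at most $k$ derivatives of $u$ in the $x_j$-slot, possibly combined with one extra full gradient, i.e.\ quantities already controlled by the $\max_{|\al|\le 2}$ and $\nabla u$ terms on the right of (\ref{eq:u_bound}) at the previous stage), so that $w$ solves a uniformly elliptic equation with bounded drift and bounded source; interior $W^{2,p}$ estimates then give $\nabla w \in L^\infty$ locally. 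The main obstacle is this bookkeeping of which terms are ``already controlled'': one must verify that no Leibniz remainder ever requires more than two $x_j$-derivatives of $u$ together with more than one transverse derivative, so that everything closes against the three terms on the right-hand side of (\ref{eq:u_bound}); this is the same structural feature exploited in \cite[Section 3]{hs23}, and invoking their Theorem~3.2 as adapted is the cleanest way to package it, with the rescaling argument above supplying the explicit $l$-dependence.
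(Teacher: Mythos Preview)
The paper does not give its own proof of this proposition: it is quoted as an adaptation of \cite[Theorem 3.2]{hs23} for the special case of singleton clusters, and is simply cited. So there is no in-paper argument to compare your sketch against beyond that reference.

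That said, your rescaling bookkeeping does not deliver the claimed power of $l$. After setting $\tilde u(\by)=u(\bx_0+l\by)$ and proving the $l$-free estimate
\[
\norm{\partial_{y_j}^m\tilde u}_{L^\infty(B(0,r))}\le C\Big(\max_{|\al|\le 2}\norm{\partial_{y_j}^\al\tilde u}_{L^\infty(B(0,R))}+\norm{\nabla_y\tilde u}_{L^\infty(B(0,R))}+\norm{\tilde u}_{L^\infty(B(0,R))}\Big),
\]
undoing the scaling gives
\[
\norm{\partial_{x_j}^m u}_{L^\infty(B(\bx_0,rl))}\le C\Big(\max_{|\al|\le 2}l^{|\al|-|m|}\norm{\partial_{x_j}^\al u}+l^{1-|m|}\norm{\nabla u}+l^{-|m|}\norm{u}\Big),
\]
so the $\norm{u}$ and $\norm{\nabla u}$ terms carry $l^{-|m|}$ and $l^{1-|m|}$, not the uniform $l^{2-|m|}$ asserted in (\ref{eq:u_bound}). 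Since $l\le 1$ this is strictly weaker than the proposition, and in the application it would yield only $|\partial_{x_j}^m\phi|\lesssim \lambda_j(\bx)^{-|m|}e^{-\kappa|\bx|'}$ rather than the $\lambda_j(\bx)^{2-|m|}$ needed for Proposition~\ref{prop:phi_bound}. The missing ingredient is that the rescaled coefficients are not merely bounded but are $l\tilde\ba$ and $l^2\tilde b$; these explicit factors of $l$ must be carried through every Leibniz term in the induction, so that at each step the source $g$ is itself $O(l)$ relative to the data. If you track this (rather than absorbing it into an $l$-independent ``unscaled estimate''), the two extra powers of $l$ survive and you recover $l^{2-|m|}$. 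That refined bookkeeping is precisely what \cite[Section 3]{hs23} packages; invoking it is fine, but the rescaling layer you add on top does not by itself supply the correct $l$-dependence.
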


Since $\psi$ obeys $H\psi = E\psi$ in $\real^{3N}$, it can be shown by standard computations that $\phi$ is a weak solution to
\begin{align*}
-\Delta\phi - 2\nabla F\cdot\nabla\phi + (V - \Delta F - |\nabla F|^2 - E)\phi = 0
\end{align*}
in $\real^{3N}$ where $V$ is as in (\ref{eq:v}) and $F$ is as in (\ref{eq:f_def}). This equation can then be written as (\ref{eq:general_pde}) for $\ba = - 2\nabla F$ and $b = V - \Delta F - |\nabla F|^2 - E$. We need to show these coefficients are such that the above proposition can be used. Firstly, by evaluation of $\Delta F$ it can readily be shown that
\begin{align}
\label{eq:fv_bounded}
\nabla^k(-\Delta F + V) \in L^{\infty}(\real^{3N}), \quad k=0,1,\dots
\end{align}
We will require the following distance functions. For each $j =1,\dots,N$,
\begin{align}
\label{eq:lam}
\lambda_j(\bx) = \min\big\{1,\, |x_j|,\, 2^{-1/2}|x_j - x_k| : 1\le k \le N,\, k \ne j \big\}.
\end{align}
for $\bx \in \real^{3N}$. We note that if $\bx \in \real^{3N}\backslash\Sigma$ then $\lambda_j(\bx)>0$.

We proceed to apply the above proposition to prove Proposition \ref{prop:phi_bound} for $|m| \ge 2$. Let $0<r<R<1$ and take any $\bx \in \real^{3N}\backslash\Sigma$. By (\ref{eq:fv_bounded}) and \cite[Corollary 4.5]{hs23}, the condition (\ref{eq:ab_condition}) holds for $\bx_0 = \bx$ and $l = \lambda_j(\bx)$, where the constants are independent of the choice of $\bx$. Therefore, by using Proposition \ref{prop:u}, we get for each $|m| \ge 2$ some $C$ such that
\begin{multline*}
\norm{\partial_{x_j}^m \phi}_{L^{\infty}(B(\bx,\, r\lambda_j(\bx)))} \le C\lambda_j(\bx)^{2-|m|}\Big(\max_{|\al| \le 2}\norm{\partial_{x_j}^{\al}\phi}_{L^{\infty}(B(\bx, R\lambda_j(\bx)))}\\ + \norm{\nabla \phi}_{L^{\infty}(B(\bx, R\lambda_j(\bx)))} + \norm{\phi}_{L^{\infty}(B(\bx, R\lambda_j(\bx)))} \Big)
\end{multline*}
for all $\bx \in \real^{3N}\backslash\Sigma$. We simplify the left-hand side, and use (\ref{eq:lam}) and (\ref{eq:phi_2nd}) to write
\begin{align*}
|\partial_{x_j}^m \phi(\bx)| \le C\Big(1+ |x_j|^{2-|m|} + \sum_{k \ne j}|x_j-x_k|^{2-|m|} \Big) \norm{\psi}_{L^{\infty}(B(\bx, 1))}
\end{align*}
for some new constants. This produces the required result (\ref{eq:phi_bound}) for all $|m| \ge 2$, after applying (\ref{eq:exp_decay}).

\vskip 0.5cm
\noindent
\textbf{Acknowledgments.} The author is grateful to A.V. Sobolev and S. Fournais for helpful discussions.
\vskip 0.5cm
\noindent
\textbf{Funding.} This work was partially supported by the Villum Centre of Excellence
for the Mathematics of Quantum Theory (QMATH) with Grant No.10059. The author would also like to acknowledge previous support by Engineering and Physical Sciences Research Council (EPSRC) grant EP/W522636/1.

\bibliographystyle{unsrt}
\bibliography{refs}

\end{document}